%% file: main.tex
\documentclass[11pt]{article}
	
	\newcommand{\blind}{0}
	
    \makeatletter
    \renewcommand\section{\@startsection {section}{1}{\z@}%
                                       {-3.5ex \@plus -1ex \@minus -.2ex}%
                                       {2.3ex \@plus.2ex}%
                                       {\normalfont\fontfamily{phv}\fontsize{16}{19}\bfseries}}
    \renewcommand\subsection{\@startsection{subsection}{2}{\z@}%
                                         {-3.25ex\@plus -1ex \@minus -.2ex}%
                                         {1.5ex \@plus .2ex}%
                                         {\normalfont\fontfamily{phv}\fontsize{14}{17}\bfseries}}
    \renewcommand\subsubsection{\@startsection{subsubsection}{3}{\z@}%
                                        {-3.25ex\@plus -1ex \@minus -.2ex}%
                                         {1.5ex \@plus .2ex}%
                                         {\normalfont\normalsize\fontfamily{phv}\fontsize{14}{17}\selectfont}}
    \makeatother
	
	\usepackage{amsmath}
	\usepackage{graphicx}
	\usepackage{enumerate}
	\usepackage{xcolor}
	\usepackage{natbib} 
	\usepackage{url} 
	
	\usepackage[colorlinks, allcolors=blue]{hyperref}
	\usepackage{amsfonts, amsthm, latexsym, amssymb}
	\usepackage{booktabs}
	\usepackage{siunitx}
	\newtheorem{prop}{Proposition}
	\newtheorem{remark}[prop]{Remark}
	\newtheorem{lemma}[prop]{Lemma}
	
\begin{document}
		
		\def\spacingset#1{\renewcommand{\baselinestretch}%
			{#1}\small\normalsize} \spacingset{1}
		
		\if0\blind
		{
			\title{Sensitivity Analysis and Optimization of Stochastic Epidemic Models under Parameter Uncertainty}
			\author{Nicholas R. Wu$^a$, Michael C. Fu$^b,^c$ \\
			$^a$ Department of Mathematics, University of Maryland, College Park, MD, USA\\
             	        $^b$ Robert H. Smith School of Business, University of Maryland, College Park, MD, USA\\
		        $^c$ Institute for Systems Research, University of Maryland, College Park, MD, USA
     		}
			\date{}
			\maketitle
		} \fi
		\if1\blind
		{
			\title{Sensitivity Analysis and Optimization of Stochastic Epidemic Models under Parameter Uncertainty}
			\date{}
			\maketitle
			\vspace{-0.5in}
		} \fi


		%

		
	\input{abstract.tex}		

\input{intro.tex}

\input{sir_models.tex}

\input{gradients.tex}

\input{parameter_uncert.tex}

\input{numerical_sec_intro.tex}

\input{sensitivity_results.tex}

\input{opt_results.tex}

\input{conclusion.tex}

\textbf{Data Availability Statement:} The data and code that support the findings of this study are available from the corresponding author upon reasonable request.

\textbf{Use of Generative AI Tools:} ChatGPT-5.3 was used during the revision and writing process to suggest alternate phrasings and improve grammar. Gemini-2.5 was used for formatting assistance, coding assistance, and assistance with the proofs. 

\textbf{Disclosure Statement:} No potential conflict of interest was reported by the author(s).

\bibliographystyle{chicago}
\spacingset{1}
\bibliography{litreview, ref}

%

\end{document}

%% file: abstract.tex
\begin{abstract}
	To address sensitivity analysis and optimization for a discrete-time stochastic epidemic model, we derive unbiased gradient estimators that accommodate uncertainties represented as distributions over the parameters of interest, such as those arising from Bayesian calibration.
	Specifically, we estimate the sensitivity of total infections over a finite time horizon with respect to the proportion immunized ($v$) and the contact rate ($\beta$). Comparing the proposed estimators with deterministic limit approximations based on large populations reveals differences due to the finite population and time horizon. 
The estimators exhibit lower variance than finite-difference estimators for the derivative with respect to $\beta$, but higher variance for the derivative with respect to $v$.
Simulation experiments indicate parameter uncertainty reduces sensitivity to the parameters of interest. In particular, indirect effects of vaccination, such as herd immunity, are less pronounced compared to when parameters are known. For optimization problems balancing intervention and infection costs, incorporating parametric uncertainty leads to more conservative policies. 

\end{abstract}
\noindent
{\it Keywords:} Epidemic modeling; SIR model; stochastic gradient; sensitivity analysis; uncertainty quantification; stochastic optimization

%% file: intro.tex
\vspace{-6pt}
\section{Introduction} \label{sec:intro}
Mathematical models of infectious diseases are routinely used to inform health policy, with the modeling of uncertainty playing a major role, since disease transmission dynamics are inherently stochastic. Additionally, there may be limited data for estimating parameters, particularly during the early stages of an epidemic. It is therefore important to develop modeling techniques that can cope with these different sources of uncertainty.
\cite{saltelli2020five} advocated for the use of uncertainty quantification and sensitivity analysis techniques to help ensure that important policy decisions are not based upon spurious predictions. \cite{taleb2022single} and \cite{ioannidis2022forecasting} highlighted different limitations of models during the COVID-19 pandemic, but both emphasized the shortcomings of point estimates and advocated for probabilistic approaches to epidemic forecasting.

In particular, an important problem in epidemic modeling is to understand how uncertainty affects policy decisions. 
Different approaches have distinct tradeoffs for addressing this problem, as the assumptions made may or may not be appropriate in a given setting. For example, deterministic Susceptible-Infected-Recovered (SIR) models (\cite{hethcote2000mathematics}) are computationally tractable and can readily incorporate parameter uncertainty through standard uncertainty quantification methods, but they neglect stochasticity in transmission dynamics and may be less appropriate for small populations and short time periods. 
Agent-based models can capture individual-level stochastic transmission, and have recently been combined with differentiable programming techniques for end-to-end calibration, sensitivity analysis, and optimization (\cite{chopra2022differentiable}), but these models often require high-resolution data for model calibration, which may not be available (\cite{iranzo2021epidemiological}). Stochastic SIR models (\cite{allen2008introduction}) offer an alternative approach, capturing randomness in transmission dynamics while operating at the same resolution as deterministic SIR models. Despite this, there has been comparatively little work on sensitivity analysis and optimization under uncertainty for such models (\cite{swallow2022challenges}). 

Motivated by these considerations, we study stochastic SIR models for analyzing uncertainty in epidemic dynamics. We develop gradient-based methods for sensitivity analysis and optimization that account for uncertainty in key parameters. Namely, we focus on deriving gradient estimators for the total infections over a finite time horizon with respect to the proportion immunized $(v)$ and the effective contact rate $(\beta)$, where $\beta$ accounts for both the rate of contact between individuals and the probability that such a contact results in a new infection. To highlight potential implications for health policy, we illustrate how uncertainty impacts the effectiveness of interventions (specifically, vaccinations and contact reductions) through simulation experiments.

In sum, this work demonstrates the utility of gradient estimation for analyzing stochastic epidemic models via the following research contributions:

\begin{enumerate}
    \item We develop unbiased stochastic gradient estimators for a discrete-time stochastic SIR model. We extend these estimators to incorporate parameter uncertainty (such as distributions arising from Bayesian calibration) within a simulation-based framework.

    \item After numerically validating the proposed estimators against finite-difference estimators and deterministic formulas derived from the large-population limit of epidemic models, we show that the proposed estimators admit intuitive interpretations and provide insight into epidemic dynamics, including indirect effects such as herd immunity. 
	
    \item We investigate how parameter uncertainty affects the marginal impact of interventions by applying the gradient estimators to sensitivity analysis. Simulation experiments indicate that when averaging across plausible parameter values, the expected impact of an intervention is reduced.

    \item  
	We investigate how parameter uncertainty influences the optimal amount of intervention by solving a stylized optimization problem that models tradeoffs between intervention costs and infection costs. The results indicate that accounting for parameter uncertainty leads to more vaccination and more contact reduction, compared to no parameter uncertainty.
\end{enumerate}

The rest of the paper is organized as follows. In the remaining portions of Section \ref{sec:intro}, we discuss the related literature and describe the problem setup. In Section \ref{sec:sir_models}, we review classical results for the continuous-time stochastic SIR model, including deterministic limit approximations for large populations, and introduce the discrete-time approximate stochastic SIR model used for simulation. In Section \ref{sec:grad_est}, we derive the stochastic gradient estimators and discuss variance reduction techniques. In Section \ref{sec:parameter_uncertainty}, we define sensitivities under parameter uncertainty, and show how the gradient estimators can be used to estimate such sensitivities. In Section \ref{sec:numerical}, we present simulation experiments illustrating the proposed methods, including the numerical validation, sensitivity analysis, and optimization problems.

\vspace{-6pt}
\subsection{
\emph{Related literature}} \label{s:lit_review}

\vspace{-6pt}
\subsubsection{Sensitivity analysis of epidemic models}

The goal of sensitivity analysis is to help the analyst understand which inputs of the system contribute most to variations in the outcome of interest. Parameters that are less influential may be fixed or removed, whereas more effort should be spent on estimating parameters that are highly influential. Sensitivity analysis approaches are divided into local and global methods (\cite{saltelli2008global}). Local methods characterize how perturbations around a single point in the parameter space affect the output of interest. An example of a local sensitivity measure is the partial derivative. In contrast, global methods aim to describe the influence of parameters across the entire parameter space. 
In this work, we focus on local sensitivity analysis via gradient estimation.

Computing the sensitivity measure may require knowledge of the structure of the model, or may treat the model as a black-box. Deterministic SIR models are amenable to local gradient-based approaches through direct derivation or adjoint methods (\cite{berhe2019parameter}, \cite{mester2022differential}), as well as black-box global methods (\cite{wu2013sensitivity}). Agent-based models typically are analyzed using black-box approaches, either via direct Monte-Carlo methods or metamodeling methods (\cite{nsoesie2012sensitivity}, \cite{borgonovo2022sensitivity}). \cite{kouye2022sensitivity} developed global sensitivity analysis methods for SIR-type stochastic epidemic models by analyzing different constructions of the SIR model. \cite{lu2023global} applied black-box approaches to study a stochastic epidemic model. However, there is little work on local sensitivity methods for stochastic models.

\vspace{-6pt}
\subsubsection{Optimization of epidemic models}
Optimization routines are used with epidemic models for parameter estimation and prescriptive analyses. 
Prior work has investigated gradient-based methods for the calibration of stochastic compartmental epidemic models. \cite{alaeddini2017application} used a black-box gradient estimation algorithm to fit an epidemic model to data. \cite{rupp2024differentiated} developed a numerical linear-algebraic approach to calculate derivatives of the state distribution's likelihood with respect to the parameters of a stochastic SIR model. 

Optimization routines are also used extensively for prescriptive analyses utilizing epidemic models. Often, the goal is to minimize costs (e.g., healthcare system burdens or infections) subject to resource constraints. Some model-agnostic simulation optimization methodologies used include metaheuristics (\cite{gillis2021simulation}), the Nelder-Mead simplex method (\cite{nsoesie2013simulation}) and ranking-and-selection procedures (\cite{paleshi2017simulation}). Another approach is to integrate the epidemic model with a mathematical programming model (\cite{yin2022risk}). 
When using this approach, the epidemic model usually must be simplified so that the resulting optimization model is tractable. Alternatively, a sampling-based approach can be used, and a finite set of scenarios can be generated from the simulation model.
Prior work has also used gradient-based optimization for deterministic models (\cite{navascues2021disease}). Relevant to the current work is a growing interest in applying automatic differentiation tools to agent-based models, which automatically compute gradient estimates and yield end-to-end differentiability for both calibration and optimization (\cite{chopra2022differentiable}). This work complements existing approaches based on agent-based models by focusing instead on stochastic SIR models, which may be more appropriate when high-resolution data are unavailable.

\vspace{-6pt}
\subsubsection{Gradient estimation}

The goal of gradient estimation is to develop Monte-Carlo methods to estimate the gradients of simulation outputs with respect to system parameters. 
Broadly speaking, gradient estimation methods may be categorized as indirect or direct (\cite{fu2014stochastic}). An example of an indirect method is a finite-difference estimator, which approximates derivatives by perturbing the input parameters. These methods treat the simulation model as a black box and do not require modification of the simulation, but typically require multiple simulation runs per replication of the estimator, and introduce bias in the gradient estimate due to the finite perturbation size (\cite{fu2014stochastic}).

On the other hand, direct gradient estimation methods utilize the structure of the underlying system and can require modification of the simulation. Provided that regularity conditions hold, direct methods are unbiased; they also offer computational advantages compared to indirect methods. In this work, we focus on two direct approaches: the weak-derivative method and the likelihood-ratio method. The weak-derivative method decomposes the derivative of the probability distribution into two new distributions, generally requiring two simulations to produce one replicate of the gradient estimator \cite[Chapter 3]{pflug2012optimization}. On the other hand, the likelihood-ratio method reweights simulation outputs by the score function of the underlying distribution and requires only one simulation to produce one replicate of the gradient estimator (\cite{glynn1990likelihood}).

\vspace{-6pt}
\subsection{Problem setting}

Consider a population of $N$ individuals. At time $t=0$, a fixed number $i_0$ are infected, and a fixed proportion $v\in [0,1]$ are immune. Let $\beta$ denote the (effective) contact rate, which represents the average contact rate multiplied by the probability the contact results in an infection. $\beta$ and $v$ are the parameters of interest. For example, changes to $\beta$ and $v$ represent policy decisions corresponding to distancing/masking orders and immunization, respectively. Let the random variable $Z(v, \beta ,T)$ denote the total number of individuals that become infected over a time horizon $T$. The quantity of interest is the expected total number of infections, which we write as $J(v,\beta,T) = \mathbb{E}[Z(v,\beta,T)]$. The goal of gradient estimation is to estimate the partial derivatives $J_v$ and $J_\beta$ via a Monte-Carlo procedure. Uncertainty quantification is addressed by assuming a probability distribution over the parameters $v$ and $\beta$.

%% file: sir_models.tex
\vspace{-6pt}
\section{
\emph{Stochastic SIR models}} \label{sec:sir_models}
In Section \ref{sec:sir_ctmc}, we describe the continuous-time Markov chain (CTMC) SIR model and discuss fundamental results such as the threshold limit theorem, which gives conditions for when large outbreaks are possible. Then, in Section \ref{sec:final_size_equations}, we explain how approximate formulas for $J_v$ and $J_\beta$ can be derived based on deterministic large-population limits of the CTMC SIR model. 
For the finite-population case, stochastic gradient estimators for $J_v$ and $J_\beta$ are derived in Section \ref{sec:grad_est}, using a discrete-time approximation to the CTMC SIR model based on the TSIR (Time-Series SIR) model introduced by \cite{grenfell2002dynamics}, and described in Section \ref{sec:discrete_time_approx}.

\vspace{-6pt}
\subsection{Continuous-time Markov chain SIR model} \label{sec:sir_ctmc}

Consider the continuous-time Markov chain (CTMC) formulation of the SIR model. Let $S(t), I(t), R(t)$ denote the susceptible, infected and recovered individuals at time $t$. We assume that the population is closed, i.e., for all times $t$, we have $S(t) + I(t) + R(t) = N$. It then suffices to take the state space as $\mathcal{S} = \{(s,i) \in \mathbb{Z}_{\geq 0} \times \mathbb{Z}_{\geq 0}: s+i\leq N\}$, where $\mathbb{Z}_{\geq 0}$ denotes the nonnegative integers. We first consider an initial state where there are $i_0$ initially infected individuals, and all remaining are susceptible, i.e., 
$S(0) = N - i_0$ and $I(0) = i_0$.
In this model, two types of events occur: (1) $S \rightarrow I$, one susceptible individual is infected, or (2) $I \rightarrow R$, one infected individual recovers. The time between events is exponentially distributed. Let $\lambda(s,i)$ denote the (state-dependent) rate of infection events as a function of $(s,i)$, and similarly let $\mu(s,i)$ denote the rate of recovery events. The rates are defined as
$\lambda(s,i) = \beta i \frac{s}{N}$, and
$\mu(s,i) = \gamma i$
where $\beta$ is called the (effective) contact rate, and $\gamma$ is called the recovery rate. 
This system does not have a unique stationary distribution: each state of the form $(s,0)$ is absorbing, because no new infections can occur, and each state of the form $(s,i)$ for $i > 0$ is transient, because all infectious individuals recover. In other words, the disease always goes extinct after infecting a random proportion of the population.

A central quantity for understanding the system is the basic reproductive number ${\mathcal{R}_0 = \beta/\gamma}$, the product of the effective contact rate and the mean infectious duration. $\mathcal{R}_0$ can be used to understand whether an outbreak will occur in a sufficiently large population. In particular, if $\mathcal{R}_0 \leq 1$, an outbreak cannot occur, and if $\mathcal{R}_0 > 1$, an outbreak will occur with some probability. Moreover, if an outbreak does occur, we can use $\mathcal{R}_0$ to calculate the proportion of the population infected. This is summarized by the threshold limit theorem, described in Theorems 3.1 and 4.2 of \cite{Andersson_Britton_2000}. Theorem 3.1 shows that for any fixed time $t_0$, $I(t_0)$ converges almost surely to $\mathcal{I}(t_0)$ as $N\rightarrow \infty$, the number alive at time $t_0$ in a branching process coupled to the SIR model. An outbreak corresponds to the branching process's population growing to infinity, and no outbreak corresponds to extinction of the branching process. The probability of extinction $q$ is computed by solving for the smallest root of the equation $q = \phi(q)$, where $\phi$ is the probability generating function of the $\text{Poisson}(\mathcal{R}_0)$ distribution. Theorem 4.2 concerns the size of an outbreak, and shows that as $N \rightarrow \infty$, the expected proportion infected at extinction converges to the solution $\tau$ of the equation $1 - e^{-\mathcal{R}_0 \tau} = \tau$. 

In this work, the initial number of immune individuals follows a binomial distribution with success probability $v$, and is independent of the infection dynamics. Letting $V$ denote the number of initially immune individuals, 
\begin{equation}
V \sim \text{Bin}(N - i_0, v), \quad
S(0) = N -i_0 - V, \quad
I(0) = i_0.
\label{eqn:vacc_init_cond}
\end{equation}
To account for vaccination in the threshold limit theorem, we define the effective reproductive number as $\mathcal{R}_{\text{eff}} = (1-v) \mathcal{R}_0$. Rearranging the inequality $\mathcal{R}_{\text{eff}} \leq 1$ yields the inequalities
$v \geq 1 - \gamma/\beta$
and
$\beta \leq \gamma/(1-v),$
i.e., given a fixed $v$, the maximum allowable contact rate to prevent an outbreak; and given a fixed $\beta$, the minimum vaccine coverage to prevent an outbreak. We refer to these inequalities as \textit{threshold conditions.} Without loss of generality, we scale the contact rates $\beta$ and $\gamma$ so that $\gamma = 1$. Then, one time unit corresponds to the mean infectious duration $1/\gamma$. In this work, we follow this convention due to the time scaling in the discrete-time approximation (to be discussed in Section \ref{sec:discrete_time_approx}).

\vspace{-6pt}
\subsection{Sensitivity analysis based upon the final size equations} \label{sec:final_size_equations}

Theorem 4.2 of \cite{Andersson_Britton_2000} yields a deterministic limit approximation for the total number of individuals that are infected. This approximation applies when $\mathcal{R}_0 > 1$ and the population is large, corresponding to the regime where $N \rightarrow \infty$ and $i_0$ is held constant. We denote the (expected) total number infected up to the extinction time of the disease by $Z_\infty$. Note that the extinction of the disease occurs at a random time; we will compare this with the gradient estimation approach developed in Section \ref{sec:grad_est}, which concerns a finite population $N$ up to a fixed time $T$.

\begin{remark}[Final size approximation and sensitivities]
Consider a large-population SIR model with $\gamma = 1$ and $\mathcal{R}_{\mathrm{eff}} = \beta(1-v)$. Let $q$ denote the extinction probability and $\tau$ the final proportion infected. Let $Z_\infty$ denote the expected total number infected up to extinction of the disease. Then,
$Z_\infty \approx (1-q)N\tau.$
We can thus write the derivatives of interest as
$$
\frac{dZ_\infty}{dv} = N\left[-\frac{dq}{dv}\,\tau + (1-q)\frac{d\tau}{dv}\right], \quad
\frac{dZ_\infty}{d\beta} = N\left[-\frac{dq}{d\beta}\,\tau + (1-q)\frac{d\tau}{d\beta}\right].
$$
\end{remark}

To explain the expressions above, consider a population where $N$ is large. If $\mathcal{R}_0 >1$, then $1-q$ corresponds to the probability of a large outbreak occurring. Since $\tau$ is the expected proportion infected, $N\tau$ is the expected number of individuals infected. Thus, we expect that $\mathbb{E}[Z_\infty] \approx (1-q)N\tau$. To compute $q$ considering the effects of vaccination, we plug $\mathcal{R}_\text{eff}$ into the probability generating function for $\text{Poisson}(\mathcal{R}_\text{eff})$, yielding the equation $q = e^{\mathcal{R}_{\text{eff}} (q-1)}$.
To compute $\tau$ considering the effects of vaccination, we solve the equation $\tau = (1-v)(1-e^{- \beta \tau})$ (\cite{Miller_2012}). We refer to these equations as the final size equations, and they can be solved by using a standard root-finding method. 
Then, the derivatives of this expression with respect to $v$ and $\beta$ can be calculated via implicit differentiation. This yields the following formulas, given that $v$ and $\tau$ are known:
$$
\frac{d \tau}{dv}
=
\frac{-(1- e^{-\beta \tau})}
{1 - (1-v)\beta e^{-\beta \tau}},
\quad
\frac{d \tau}{d \beta}
=
\frac{\tau(1-v)e^{-\beta \tau}}
{1 - (1-v)\beta e^{-\beta \tau}},
\quad
\frac{dq}{dv}
=
-\beta \frac{dq}{d \mathcal{R}_{\text{eff}}},
$$
$$
\frac{dq}{d\beta}
=
(1-v)\frac{dq}{d \mathcal{R}_{\text{eff}}},
\quad
\frac{d q}{d \mathcal{R}_{\text{eff}}}
=
\frac{-(q-1)\exp\{\mathcal{R}_{\text{eff}} (q-1)\}}
{\mathcal{R}_{\text{eff}}\exp\{\mathcal{R}_{\text{eff}} (q-1)\} -1}.
$$


\vspace{-6pt}
\subsection{Discrete-time approximation} \label{sec:discrete_time_approx}

The discrete-time approximation is based upon the Time-Series SIR (TSIR) model, introduced by \cite{grenfell2002dynamics}. We provide a brief description of its derivation; more details can be found in \cite{lomeli2021statistical} and 
\cite{Wakefield_Dong_Minin_2019}. First, we discretize time into increments $\{\Delta t, 2\Delta t, 3\Delta t, \ldots\}$ such that $\Delta t = 1/\gamma$, the mean infectious duration. However, since we assume $\gamma=1$, we define the index set of the process to be $\{0, 1, 2, \ldots\}$. Accordingly, we assume all individuals in the $I$ compartment transition to the $R$ compartment at the end of one timestep. We denote the discrete-time process with subscripts $S_t, I_t, R_t$. Due to these assumptions, the balance equations
$S_{t+1} = S_t - I_{t+1},\ 
 R_{t+1} = R_t + I_t$
are satisfied.
Second, we assume that the number of susceptible individuals is approximately constant over the next time interval $[t, t + \Delta t]$. Then the number of infected individuals over the next time interval can be approximated by a continuous-time linear pure-birth process. For a small time perturbation $\delta t$,
$$
\mathbb{P}(I(t+\delta t)=j \mid I(t)=i,S(t)=s) = \begin{cases}
    (\beta \frac{s}{N}) i + o(\delta t) &j=i+1, \\
    o(\delta t) & \text{otherwise.}
\end{cases}
$$
It can be shown that $I_{t+1}$, the total number of new infections over $[t,t+\Delta t]$ is negative-binomial distributed (e.g., see \cite{serfozo2009basics}, Exercise 4.9), with mean equal to
\[
E[I_{t+1} \mid I_t =i_t, S_t =s] = i_t \left(\exp \left\{  \frac{\beta s}{N} \Delta t \right\} -1\right),
\]
and shape equal to $i_t$. 
Taking the first-order Taylor expansion, and since $\Delta t=1$,
\[
i_t \left(\exp \left\{  \frac{\beta s}{N} \right\} -1\right) \approx \beta \frac{s i_t}{N} = \lambda(s,i_t).
\]
We right-truncate the distribution to prevent the number of new infections from exceeding the number of susceptible individuals. Altogether, this gives the following stochastic recursion that can be used to simulate the chain:
\begin{equation}
\begin{gathered}
Y_{t+1} \mid S_t, I_t \sim \begin{cases}
\text{NegBin}\left(\frac{\beta S_t I_t}{N}, I_t\right) & I_t > 0, \\
\delta_0 & I_t = 0,
\end{cases} \\
I_{t+1} = \min(S_t, Y_t), \quad 
S_{t+1} = S_t - I_{t+1}, \quad
R_{t+1} = R_t + I_t,
\label{eqn:tsir_stoch_recur}
\end{gathered}
\end{equation}
where $\delta_0$ represents a point mass at 0; i.e 0 with probability 1. As before, we take the random initial state described in (\ref{eqn:vacc_init_cond}) to incorporate the vaccination parameter $v$.
Because all infected individuals recover at the end of each timestep, the number of total infections $Z$ can be written as
$Z = \sum_{t=0}^T I_t$. We write $Z(v,\beta,T)$ to emphasize the dependence upon the parameters $v$ and $\beta$.

%% file: gradients.tex
\vspace{-6pt}
\section{\emph{Gradient estimation}} \label{sec:grad_est}
We aim to estimate the derivative of $J(v,\beta,T) = \mathbb{E}[Z(v,\beta,T)]$
with respect to $v$ and $\beta$. 
Let $J_v(v,\beta,T)$, $J_\beta(v,\beta,T)$ denote the respective derivatives to be estimated, i.e.,
$$
J_v(v,\beta,T) = \frac{d \mathbb{E}[Z(v,\beta,T)]}{dv},
\quad
J_\beta(v,\beta,T) = \frac{d \mathbb{E}[Z(v,\beta,T)]}{d\beta}.
$$
Denote the (random) joint state at time $t$ by $\mathbf{X}_t := (S_t,I_t)$, and their realizations by $\mathbf{x}_t = (s_t,i_t)$. Then,
\begin{equation}
J(v,\beta,T) = 
\sum_{(\mathbf{x_0}, \ldots, \mathbf{x}_T) \in \mathcal{S}^T} 
\left(\sum_{t=0}^T i_t\right)
\mu(\mathbf{x}_0; v) 
\prod_{j=0}^{T-1} P(\mathbf{x}_{j+1} \mid \mathbf{x}_j, ; \beta)
\label{eqn:cost_fn}
\end{equation}
where $\mathcal{S}^T$ is the set of sequences of length $T$ valued in $\mathcal{S}$, $\mu(\cdot; v)$ is the distribution of the random initial state parameterized by $v$, and $P(\cdot \mid \cdot ; \beta)$ is the transition distribution parameterized by $\beta$. Note that (\ref{eqn:tsir_stoch_recur}) is written as a stochastic recursion, which gives a particular construction of the process, but does not directly specify $\mu(\cdot ; v)$ and $P(\cdot \mid \cdot ; \beta)$. Written explicitly, these distributions are
$$
\mu(s, i; v) = 
\begin{cases}
\binom{N - i_0}{N - i_0 -s}v^{N-i_0-s}(1-v)^{s} & 0 \leq s \leq N - i_0,i = i_0, \\
0 & \text{otherwise,}
\end{cases}
$$
$$
P(s', i' \mid s, i; \beta) = 
\begin{cases} 
   1 - \sum_{j=0}^{s-1} P_Y(j \mid s, i; \beta) & \text{if } s' = 0, \ i' = s, \text{ and } s, i > 0, \\
   P_Y(i' \mid s,i; \beta) & \text{if } s' = s - i', \ i' < s, \text{ and } s, i > 0, \\
   1 & \text{if } (i=0 \text{ or } s=0) \text{ and } s'=s, i'=0, \\
   0 & \text{otherwise,}
\end{cases}
$$
where $P_Y$ is the probability mass function of the negative-binomial distributed random variable $Y$ used in the construction of the TSIR process. $$
P_Y(y \mid s,i; \beta) = \frac{\Gamma(i + y)}{y! \Gamma(i)} \left(\frac{i}{i + \beta s i / N}\right)^i 
\left(
\frac{\beta si / N}{i + \beta s i /N}
\right)^y,
\quad
y = 0, 1, 2, \ldots
$$
This mass function corresponds to the form of the negative binomial with mean equal to $\lambda(s,i) = \beta s i / N$ and shape parameter equal to $i$. $P_Y(\cdot \mid s,i;\beta)$ is defined only if $s > 0$ and $i > 0$; and $\beta > 0$.

\vspace{-6pt}
\subsection{Weak-derivative gradient estimator for $J_v$} \label{sec:wd_grad}

Recall that the proportion immunized ($v$) enters the model via a random number of initially immune individuals, denoted $V \sim \text{Bin}(N-i_0, v)$ with PMF denoted by $\mu(\cdot;v)$. The weak-derivative (WD) method is based upon decomposing the derivative of $\mu$ into two PMFs. In particular, using Table 5.1 from \cite{fu2014stochastic},
$
\frac{d}{dv}\left[\mu(\mathbf{x}_0; v)\right]
=
(N - i_0)\left(\mu^+(\mathbf{x}_0;v) - \mu^-(\mathbf{x}_0; v)\right),
$
where $\mu^+$ is the distribution of $V^+ \sim 1 + \text{Bin}(N - i_0 -1,v)$ and $\mu^-$ is the distribution of $V^- \sim \text{Bin}(N-i_0-1,v)$. Define $Z^+(v,\beta,T)$ and $Z^-(v,\beta,T)$ as the outbreak sizes up to time $T$, given that the initially immune population is given by $V^+$ and $V^-$, respectively. The derivative $J_v$ can then be estimated by simulating versions of $Z^+$ and $Z^-$, taking the difference, and scaling by $N-i_0$.

Suppose we have $K$ independent realizations of tuples $(Z_k^+, Z_k^-)_{k=1}^K$, allowing dependence between $Z_k^+$ and $Z_k^-$. Then, we define the WD gradient estimator as
\begin{equation}
\widehat J_{v,K} = \frac{N-i_0}{K} \sum_{k=1}^K (Z^+_k - Z^-_k).
\label{eqn:wd_est}
\end{equation}
\begin{prop}
	The WD gradient estimator is unbiased: $J_v = \mathbb{E}[\widehat J_{v, K}]$.
	\label{prop:wd_unbiased}
\end{prop}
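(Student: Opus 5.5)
By linearity of expectation, and because the $K$ tuples are identically distributed, $\mathbb{E}[\widehat J_{v,K}] = (N-i_0)\,\mathbb{E}[Z^+ - Z^-]$ for one generic tuple. Dependence between $Z^+_k$ and $Z^-_k$ does not matter, since only the marginal laws of $Z^+$ and $Z^-$ enter their expectations. It therefore suffices to show
\[
J_v = (N-i_0)\bigl(\mathbb{E}[Z^+] - \mathbb{E}[Z^-]\bigr).
\]

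Start from the path representation (\ref{eqn:cost_fn}). Write $h(\mathbf{x}_0) := \mathbb{E}[Z(v,\beta,T)\mid \mathbf{X}_0=\mathbf{x}_0]$, the sum over $(\mathbf{x}_1,\ldots,\mathbf{x}_T)$ of $(\sum_t i_t)\prod_j P(\mathbf{x}_{j+1}\mid \mathbf{x}_j;\beta)$. The transition kernel $P(\cdot\mid\cdot;\beta)$ does not involve $v$, so $h$ is independent of $v$ and
\[
J(v,\beta,T)=\sum_{\mathbf{x}_0}h(\mathbf{x}_0)\,\mu(\mathbf{x}_0;v).
\]
Both sums are finite: the state space $\mathcal{S}$ is finite, the horizon $T$ is finite, and $0\le \sum_t i_t \le N$ because the truncation forces $i_t\le s_{t-1}$. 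Hence $h$ is bounded by $N$, and differentiation in $v$ can be taken inside the sum with no interchange argument needed. This gives
\[
J_v=\sum_{\mathbf{x}_0}h(\mathbf{x}_0)\,\frac{d}{dv}\mu(\mathbf{x}_0;v).
\]

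Next, apply the weak-derivative decomposition of the binomial PMF. Since $\mathbf{x}_0$ is determined by $V$ through $s_0=N-i_0-V$, $i_0$ fixed, it is enough to check the binomial identity
\[
\tfrac{d}{dv}\,\text{Bin}(n,v)(k)=n\bigl[\text{Bin}(n-1,v)(k-1)-\text{Bin}(n-1,v)(k)\bigr], \qquad n=N-i_0.
\]
This follows by differentiating $\binom{n}{k}v^k(1-v)^{n-k}$ and using $k\binom{n}{k}=n\binom{n-1}{k-1}$ and $(n-k)\binom{n}{k}=n\binom{n-1}{k}$. The first term is the PMF of $V^+=1+\text{Bin}(n-1,v)$ and the second is that of $V^-$. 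Substituting,
\[
J_v=(N-i_0)\Bigl[\sum h\,\mu^+ - \sum h\,\mu^-\Bigr]=(N-i_0)\bigl(\mathbb{E}[Z^+]-\mathbb{E}[Z^-]\bigr),
\]
because $Z^\pm$ are generated by the same kernel from initial laws $\mu^\pm$.

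The only delicate point is the boundary values of $v$ and $k$ (the cases $k=0$ or $k=n$, and $v\in\{0,1\}$). The identity should be verified there with the conventions $\binom{n-1}{-1}=\binom{n-1}{n}=0$, using one-sided derivatives at the endpoints of $[0,1]$. Apart from this, the argument is a finite algebraic computation.
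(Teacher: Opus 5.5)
Your proof is correct and follows the same route as the paper: write $J$ as the finite path sum, differentiate only the initial law $\mu(\cdot;v)$ inside that finite sum, and substitute the binomial weak-derivative decomposition $\frac{d}{dv}\mu=(N-i_0)(\mu^+-\mu^-)$. You do two things the paper does not: you check that decomposition directly via $k\binom{n}{k}=n\binom{n-1}{k-1}$ and $(n-k)\binom{n}{k}=n\binom{n-1}{k}$ instead of citing Table 5.1 of Fu (2014), and you use $h(\mathbf{x}_0)$ to show explicitly that the conditional expectation given $\mathbf{X}_0$ does not depend on $v$.
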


\noindent \textit{Proof.} We differentiate both sides of (\ref{eqn:cost_fn}) and interchange the derivative with summation over $\mathcal{S}^T$. This is permitted since $\mathcal{S}^T$ is a finite set. Since only $\mu(\cdot \mid ; v)$ depends on $v$,
$$
J_v = \frac{d}{dv}\mathbb{E}\left[Z(v,\beta, T)\right] = 
\sum_{\mathbf{x_0}, \ldots, \mathbf{x}_T \in \mathcal{S}^T} 
\left(\sum_{t=0}^T i_0\right)
\frac{d}{dv}\left[\mu(\mathbf{x}_0; v)\right]
\prod_{j=0}^{T-1} P(\mathbf{x}_{j+1} \mid \mathbf{x}_j, ; \beta).
$$

Plugging the decomposition of $\mu$ back into (\ref{eqn:cost_fn}) yields
\begin{align*}
J_v(v,\beta,T) &= 
(N- i_0)\sum_{\mathbf{x_0}, \ldots, \mathbf{x}_T \in \mathcal{S}^T} 
\left(\sum_{t=1}^T i_t\right)
\left[ \mu^+(\mathbf{x}_0; v) - \mu^-(\mathbf{x}_0; v)\right]
\prod_{j=0}^{T-1} P(\mathbf{x}_{j+1} \mid \mathbf{x}_j, ; \beta) \\
&= (N-i_0) \mathbb{E}[Z^+_k - Z^-_k]
\end{align*}
Averaging across $K$ independent replications does not change the mean. \qed

We consider a variance reduction approach for $\widehat J_v$ based on common random numbers (CRN) \cite[Chapter 11]{law2007simulation}. Intuitively, we want the difference in $Z^+$ and $Z^-$ to be completely due to the change in the parameter $v$, and not due to randomness. We induce dependence between $V^+$ and $V^-$ by noting that if we set $V^+ = 1 + V^-$,
each still has the correct marginal distribution. We simulate the process using the truncated stochastic recursion (\ref{eqn:tsir_stoch_recur}), and use the same random numbers for the dynamics, using an inverse-transform sampling procedure. Letting $F_Y(y \mid s, i; \beta)$ denote the CDF corresponding to the mass function $P_Y(y \mid s,i ; \beta)$, the next $Y$ is sampled using a common uniform random number $u_t$:
$Y^+_{t+1} = F^{-1}_{Y}(u_t \mid S_t^+, I_t^+; \beta ),\ 
Y^-_{t+1} = F^{-1}_{Y}(u_t \mid S_t^-, I_t^-; \beta).$

We denote the resulting CRN estimator as $\widehat{J}_{v,K}^{\text{CRN}}$. Intuitively, $V^+$ and $V^-$ under this construction should be negatively correlated since $V^+$ corresponds to one additional vaccination relative to $V^-$, reducing the outbreak size $Z$ by at least one. This indicates that this estimator can benefit from CRN. In Section \ref{sec:gradients_numerical}, we will compare the CRN construction to one where $Z^+$ and $Z^-$ are simulated independently.

Notice that this gradient estimator has an intuitive interpretation. We randomly designate one individual from the population. Then, we immunize a proportion $v$ of the remaining population. We then consider the difference in the outbreak size where the designated individual is immunized ($Z^+$), and where the designated individual is not ($Z^-$). Taking the difference of these two simulations and multiplying by the total population excluding the initially infected individuals yields the derivative, i.e., the marginal effect of vaccinating the designated individual.

\vspace{-6pt}
\subsection{Likelihood-ratio gradient estimator for $J_\beta$} \label{sec:LR_grad}

Next, we consider the likelihood-ratio (LR) gradient estimator for $J_\beta$. The LR method uses the fact that the derivative of the PMF may be represented as a product of the original PMF and the score function. In particular, notice that $\beta$ enters the model via the transition kernel $P(\cdot \mid \cdot; \beta)$, and
$$
\frac{d}{d\beta} P(\mathbf{x}_{j+1} \mid \mathbf{x}_j ; \beta) = 
\frac{\partial \log P(\mathbf{X}_{j+1}^{(k)} \mid \mathbf{X}_j^{(k)}; \beta)}{\partial \beta}
P(\mathbf{x}_{j+1} \mid \mathbf{x}_j ;\beta),
$$
for $P(\mathbf{x}_{j+1} \mid \mathbf{x}_j ; \beta) > 0$. This allows us to write the derivative as an expectation with respect to the distribution of the epidemic process, i.e., the gradient estimator can be calculated using sample paths of the original simulation. Applying this along with the product rule to ${\prod_j P(\mathbf{x}_{j+1} \mid \mathbf{x}_j ; \beta)}$ yields a summation of score function terms, multiplied by the PMF: 
\begin{equation}
\frac{d}{d\beta} \left[\prod_{j=1}^{T-1}P(\mathbf{x}_{j+1} \mid \mathbf{x}_j ; \beta)\right] = 
\left(\sum_{j=0}^{T-1}\frac{\partial \log P(\mathbf{X}_{j+1}^{(k)} \mid \mathbf{X}_j^{(k)}; \beta)}{\partial \beta}\right)
\prod_{j=1}^{T-1}P(\mathbf{x}_{j+1} \mid \mathbf{x}_j ;\beta).
\label{eqn:lr_trick}
\end{equation}
Take $K$ i.i.d samples of the epidemic process, i.e., let $\left(\mathbf{X}_t^{(k)}\right)_{t=0}^T = (S_t^{(k)}, I_t^{(k)})_{t=0}^T$ be the $k$th sample path of the process up to time $T$. The LR gradient estimator is
\begin{equation}
	\widehat J_{\beta, K} = 
	\frac{1}{K}
		\sum_{k=1}^K
		Z_k\left(
		\sum_{j=0}^{T-1} 
		\frac{\partial \log P(\mathbf{X}_{j+1}^{(k)} \mid \mathbf{X}_j^{(k)}; \beta)}{\partial \beta}\right).
	\label{eqn:LR_est}
	\end{equation}
\begin{prop}
	The LR gradient estimator is unbiased: $J_\beta(v,\beta,T) = \mathbb{E}[\widehat J_{\beta, K}]$.
	\label{prop:lr_unbiased}
\end{prop}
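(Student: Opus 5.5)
We follow the same route as the proof of Proposition \ref{prop:wd_unbiased}. We differentiate both sides of (\ref{eqn:cost_fn}) with respect to $\beta$. Because the state space is finite, $\mathcal{S}^T$ is a finite set, so the derivative can be moved inside the sum without any dominated-convergence argument. Only the transition kernel $P(\cdot\mid\cdot;\beta)$ depends on $\beta$; the initial law $\mu(\cdot;v)$ and the path functional $\sum_t i_t$ do not. Hence
\[
J_\beta = \sum_{(\mathbf{x}_0,\ldots,\mathbf{x}_T)\in\mathcal{S}^T}\Big(\sum_{t=0}^T i_t\Big)\mu(\mathbf{x}_0;v)\,\frac{d}{d\beta}\prod_{j=0}^{T-1}P(\mathbf{x}_{j+1}\mid\mathbf{x}_j;\beta).
\]

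We then split the sum according to whether the path has positive probability. If some factor $P(\mathbf{x}_{j+1}\mid\mathbf{x}_j;\beta)$ vanishes, we must check that its $\beta$-derivative also vanishes there.

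This is the step needing care, and it is done case by case using the explicit form of $P$:
\begin{itemize}
\item The ``otherwise'' branch is identically $0$ for all $\beta>0$.
\item The absorbing branch ($i=0$ or $s=0$) is identically $1$, so it has zero derivative and zero score.
\item For $s,i>0$, $P_Y(y\mid s,i;\beta)$ is strictly positive and smooth in $\beta>0$. The truncation mass $1-\sum_{j<s}P_Y(j\mid s,i;\beta)=\sum_{j\ge s}P_Y(j)$ is also strictly positive and differentiable.
\end{itemize}
So a transition has zero probability exactly when it lies in a branch that is $0$ for all $\beta$. Paths of probability zero therefore contribute nothing to the derivative. On the remaining paths, the product rule together with $\frac{d}{d\beta}P=P\,\partial_\beta\log P$ gives (\ref{eqn:lr_trick}), with the product running over $j=0,\dots,T-1$.

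Substituting back, we obtain
\[
J_\beta=\sum_{\text{paths}}\Big(\sum_t i_t\Big)\Big(\sum_{j=0}^{T-1}\partial_\beta\log P(\mathbf{x}_{j+1}\mid\mathbf{x}_j;\beta)\Big)\mu(\mathbf{x}_0;v)\prod_j P(\mathbf{x}_{j+1}\mid\mathbf{x}_j;\beta).
\]
This is exactly $\mathbb{E}\big[Z\sum_j\partial_\beta\log P(\mathbf{X}_{j+1}\mid\mathbf{X}_j;\beta)\big]$ for a single sample path drawn from the model. Each of the $K$ i.i.d.\ summands in (\ref{eqn:LR_est}) has this mean, so their average $\widehat J_{\beta,K}$ does as well.

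The main obstacle is the boundary handling in the second paragraph: the truncated atom at $i'=s$ and the absorbing states. We also have to note that the score is finite there, since the truncation mass is positive and its derivative is a finite sum of smooth terms. We therefore expect no further integrability issues, because $\mathcal{S}^T$ is finite.
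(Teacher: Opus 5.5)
Your proposal is correct and follows the paper's proof. You differentiate (\ref{eqn:cost_fn}) term by term over the finite path space, apply the product rule with $\frac{d}{d\beta}P = P\,\partial_\beta \log P$ as in (\ref{eqn:lr_trick}), and identify the result as $\mathbb{E}\bigl[Z\sum_{j=0}^{T-1} \partial_\beta\log P(\mathbf{X}_{j+1}\mid\mathbf{X}_j;\beta)\bigr]$; averaging $K$ i.i.d.\ replications then preserves the mean. Your case check that zero-probability transitions vanish identically in $\beta$ adds rigor that the paper only gestures at with ``for $P>0$'', whereas the paper also writes out the explicit score (\ref{eqn:score_function}), which implementation needs but unbiasedness does not.
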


\noindent \textit{Proof.} Differentiating both sides of (\ref{eqn:cost_fn}) and applying (\ref{eqn:lr_trick}),
\begin{align*}
J_\beta = \frac{d}{d\beta}\mathbb{E}\left[Z(v,\beta, T)\right] 
&= 
\sum_{\mathbf{x_0}, \ldots, \mathbf{x}_T \in \mathcal{S}^T} 
\left(\sum_{t=0}^T i_t\right)
\mu(\mathbf{X}_0; v) 
\frac{d}{d \beta}\left[\prod_{j=0}^{T-1} P(\mathbf{x}_{j+1} \mid \mathbf{x}_j, ; \beta)\right] \\
&=
\mathbb{E}\left[
	\left(\sum_{t=0}^T I_t\right)
	\left(\sum_{t=0}^{T-1} \frac{\partial \log P(\mathbf{X}_{j+1} \mid \mathbf{X}_j ; \beta)}{\partial \beta}\right)
\right].
\end{align*}
By directly taking the derivative, 
\begin{equation}
	\label{eqn:score_function}
\frac{\partial \log P(s', i' \mid s, i)}{\partial \beta} = 
	\begin{cases} 
	\dfrac{- \frac{si}{N} \displaystyle\sum_{j=0}^{s-1} \left[ \left( \frac{j}{\beta s i /N} - \frac{j + i}{i + \beta s i /N} \right) P_Y(j \mid s, i) \right]}{1 - \displaystyle\sum_{j=0}^{s-1} P_Y(j \mid s, i)} 
	& \substack{s' = 0\\ i' = s\\ s,i > 0,} \\[3em]
	\dfrac{si}{N} \left( \dfrac{i'}{\beta s i /N} - \dfrac{i' + i}{i + \beta s i /N} \right) 
	& \substack{s' = s - i'\\ i' < s \\ s, i > 0,}\\[1.5em]
	0 & \text{otherwise.}
	\end{cases}
\end{equation}

Averaging over $K$ independent simulations does not change the mean.
\qed

For variance reduction, we first simplify the estimator by verifying the following equality.
\begin{lemma}
	\label{prop:sum_simplify}
$
\mathbb{E}\left[
	\left(\sum_{t=0}^T I_t\right)
	\left(\sum_{t=0}^{T-1} \frac{\partial \log P(\mathbf{X}_{j+1} \mid \mathbf{X}_j ; \beta)}{\partial \beta}\right)
\right]
=
\mathbb{E}\left[
\sum_{j=0}^{T-1} \sum_{t = j+1}^T I_t 
\frac{\partial \log P(\mathbf{X}_{j+1} \mid \mathbf{X}_j; \beta)}{\partial \beta}
\right].
$
\end{lemma}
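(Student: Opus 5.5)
Expand the product of sums in the left-hand side as a double sum $\sum_{t=0}^T\sum_{j=0}^{T-1} I_t\, G_j$, where $G_j := \partial_\beta \log P(\mathbf{X}_{j+1}\mid\mathbf{X}_j;\beta)$ denotes the $j$th score term. Split this double sum according to whether $t\le j$ or $t\ge j+1$. The terms with $t\ge j+1$ are exactly the right-hand side. The claim therefore reduces to showing that $\mathbb{E}[I_t G_j]=0$ whenever $t\le j$. Since the state space $\mathcal{S}^T$ is finite, all expectations are finite sums, so linearity and the interchange of summation are immediate.

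For the key step, fix $t\le j$ and condition on $\mathcal{F}_j=\sigma(\mathbf{X}_0,\dots,\mathbf{X}_j)$. The variable $I_t$ is $\mathcal{F}_j$-measurable, so
\[
\mathbb{E}[I_t G_j]=\mathbb{E}\bigl[I_t\,\mathbb{E}[G_j\mid\mathcal{F}_j]\bigr].
\]
By the Markov property, $\mathbb{E}[G_j\mid \mathcal{F}_j]=\sum_{\mathbf{x}'} \partial_\beta \log P(\mathbf{x}'\mid\mathbf{X}_j;\beta)\,P(\mathbf{x}'\mid\mathbf{X}_j;\beta)$, where the sum runs over $\mathbf{x}'$ with positive transition probability. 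This equals $\sum_{\mathbf{x}'}\partial_\beta P(\mathbf{x}'\mid\mathbf{X}_j;\beta)=\partial_\beta\sum_{\mathbf{x}'}P(\mathbf{x}'\mid\mathbf{X}_j;\beta)=\partial_\beta 1=0$. This is the standard zero-mean property of the score, i.e.\ the LR identity already used in the proof of Proposition~\ref{prop:lr_unbiased}.

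The main point requiring care is that the score identity remains valid with the explicit kernel $P(\cdot\mid\cdot;\beta)$, including the truncated atom and the absorbing cases. The plan is to check the three cases directly:
\begin{itemize}
\item When $i=0$ or $s=0$, the transition is deterministic and the score is $0$ by (\ref{eqn:score_function}).
\item When $s,i>0$, the support $\{(s-i',i'): 0\le i'\le s\}$ is finite and does not depend on $\beta$.
\item Each transition probability is a differentiable function of $\beta>0$: the non-truncated masses are $P_Y(i'\mid s,i;\beta)$, and the truncated atom is $1-\sum_{j<s}P_Y(j\mid s,i;\beta)$.
\end{itemize}
Consequently the sum of their $\beta$-derivatives is the derivative of the constant $1$. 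The only subtlety is transitions whose probability might vanish, where the score is undefined. Since $\beta>0$, every $P_Y(y)>0$ and the truncated tail is strictly positive, so the support is exactly the stated finite set and no such issue arises.

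Combining the two steps, every term with $t\le j$ has zero expectation, and what remains is the right-hand side of Lemma~\ref{prop:sum_simplify}.
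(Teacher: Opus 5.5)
Your proposal is correct and matches the paper's proof. Both reduce the identity to showing $\mathbb{E}[I_t\,\partial_\beta \log P(\mathbf{X}_{j+1}\mid\mathbf{X}_j;\beta)]=0$ for $t\le j$, by conditioning on the history up to time $j$ (where $I_t$ is measurable) and using that the score has zero conditional mean because $\sum_{\mathbf{x}'}P(\mathbf{x}'\mid\mathbf{X}_j;\beta)\equiv 1$. Your case-by-case check that the kernel's support is finite, independent of $\beta$, and has strictly positive masses for $\beta>0$ is a useful detail that the paper leaves implicit.
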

\noindent \textit{Proof.} The simplification follows if we show that for $j \geq t$,
$
\mathbb{E}\left[I_t \frac{\partial \log P(\mathbf{X}_{j+1} \mid \mathbf{X}_j; \beta)}{\partial \beta}\right] = 0.
$
Let $\mathcal{H}_j = \sigma(\mathbf{X}_\ell : 0 \leq \ell \leq j)$. Applying iterated expectations yields
$
\mathbb{E}\left[I_t
    \mathbb{E}\left[        
    \frac{\partial \log P(\mathbf{X}_{j+1} \mid \mathbf{X}_j)}{\partial \beta}\mid \mathcal{H}_j
    \right]
\right],
$
since $I_t$ is $\mathcal{H}_j$-measurable. However, for any fixed $\mathbf{X}_j$,
$
 \mathbb{E}\left[        
    \frac{\partial \log P(\mathbf{X}_{j+1} \mid \mathbf{X}_j)}{\partial \beta} \mid \mathbf{X}_j
\right] = 0,
$
since this is equivalent to
$
\sum_{\mathbf{X}_{j+1}}\frac{\partial}{\partial \beta} P(\mathbf{X}_{j+1} \mid \mathbf{X}_j; \beta) = \frac{\partial}{\partial \beta} \sum_{\mathbf{X}_{j+1}} P(\mathbf{X}_{j+1} \mid \mathbf{X}_j ; \beta) = \frac{\partial}{\partial \beta} \left[1\right] = 0.
$ \qed

Fixing a time $t$, denote the sample mean of $I_t$ across these $K$ replications be $\overline{I_t}= \frac{1}{K}\sum_{k=1}^K I_t^{(k)}$. We use $\overline{I}_t^{(k)}$ as a baseline in order to center the samples of $I_t$ and further reduce the variance. Overall, the LR gradient estimator with variance reduction is 
\begin{equation}
	\widehat J_{\beta, K}^{\text{VR}} = 
\frac{1}{K}\sum_{k=1}^K\sum_{j=0}^{T-1} \sum_{t = j+1}^T \left(I_t^{(k)} - \overline{I_t}\right) 
\frac{\partial \log P(\mathbf{X}_{j+1}^{(k)} \mid \mathbf{X}_j^{(k)}; \beta)}{\partial \beta}.
\label{eqn:LR_est_VR}
\end{equation}

The usage of $\overline I_t$ as a baseline introduces a bias of order $O(1/K)$: 
\begin{prop}
$
	\mathbb{E}[\widehat{J}_{\beta, K}^{\text{VR}}]
=
	\left(\frac{K-1}{K}\right) \frac{d \mathbb{E}[Z(v,\beta,T)]}{d\beta}.
$
\end{prop}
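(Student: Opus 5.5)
Write $D_j^{(k)} := \partial \log P(\mathbf{X}_{j+1}^{(k)} \mid \mathbf{X}_j^{(k)}; \beta)/\partial \beta$ for the score terms. The plan is to split $\widehat{J}_{\beta,K}^{\text{VR}}$ into the uncentered estimator and the baseline correction, then compute the expectation of each piece using independence across replications together with the zero-mean score property from the proof of Lemma \ref{prop:sum_simplify}.

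\textbf{Step 1 (splitting).} By linearity,
\[
\mathbb{E}[\widehat{J}_{\beta,K}^{\text{VR}}] = \frac{1}{K}\sum_{k=1}^K\sum_{j=0}^{T-1}\sum_{t=j+1}^T \left(\mathbb{E}\left[I_t^{(k)} D_j^{(k)}\right] - \mathbb{E}\left[\overline{I_t}\, D_j^{(k)}\right]\right).
\]
By Lemma \ref{prop:sum_simplify} and Proposition \ref{prop:lr_unbiased}, the first group of terms sums (for each $k$) to $J_\beta$. After averaging over $k$, this piece contributes exactly $J_\beta$.

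\textbf{Step 2 (expanding the baseline).} Write $\overline{I_t} = \frac{1}{K}\sum_{m=1}^K I_t^{(m)}$, so that
\[
\mathbb{E}\left[\overline{I_t}\, D_j^{(k)}\right] = \frac{1}{K}\mathbb{E}\left[I_t^{(k)} D_j^{(k)}\right] + \frac{1}{K}\sum_{m\neq k}\mathbb{E}\left[I_t^{(m)}\right]\mathbb{E}\left[D_j^{(k)}\right].
\]
The factorization in the cross terms uses that the sample paths are i.i.d. across $k$. Next, $\mathbb{E}[D_j^{(k)}]=0$: condition on $\mathcal{H}_j$ and use $\sum_{\mathbf{x}_{j+1}} \partial_\beta P(\mathbf{x}_{j+1}\mid \mathbf{X}_j;\beta)=0$, exactly as in the proof of Lemma \ref{prop:sum_simplify}. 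Hence all cross terms vanish. The remaining diagonal term, summed over $j$ and $t>j$ and averaged over $k$, gives $\frac{1}{K}J_\beta$, again by Lemma \ref{prop:sum_simplify} and Proposition \ref{prop:lr_unbiased}.

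\textbf{Step 3 (combining).} Putting the pieces together gives $\mathbb{E}[\widehat{J}_{\beta,K}^{\text{VR}}] = J_\beta - \frac{1}{K}J_\beta = \frac{K-1}{K}J_\beta$.

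\textbf{Main obstacle.} The only delicate point is justifying the factorization and the vanishing of the cross terms. This needs independence of replications $m\neq k$ and finiteness of all expectations. Finiteness holds because $\mathcal{S}^T$ is finite, so every score value that occurs with positive probability is bounded. One must also note that the score is defined only on positive-probability transitions, where (\ref{eqn:score_function}) applies, and this suffices because expectations only charge such transitions.
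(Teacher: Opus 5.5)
Your proposal is correct and follows essentially the same argument as the paper. The uncentered part gives $J_\beta$ via Lemma~\ref{prop:sum_simplify}; expanding $\overline{I_t}$, the cross terms vanish by independence and the zero-mean score, and the surviving diagonal term contributes $\frac{1}{K}J_\beta$, exactly as in the paper's proof (your remarks on finiteness of the scores and on evaluating them only on positive-probability transitions add detail the paper leaves implicit).
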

\noindent \textit{Proof.} 
Let $S_{j,k}(\beta) = \frac{\partial \log P(\mathbf{X}_{j+1}^{(k)} \mid \mathbf{X}_j^{(k)}; \beta)}{\partial \beta}$.
Taking the expectation inside the sum and using Lemma \ref{prop:sum_simplify},
{
$
\mathbb{E}[\widehat{J}_{\beta,K}^{\text{VR}}]
=
\frac{d \mathbb{E}[Z(v,\beta, T)]}{d \beta} -
\sum_{j=0}^{T-1} \sum_{t=j+1}^T 
\mathbb{E}\left[
\overline{I_t}  
	S_{j,k}(\beta)
	\right].
$
}
The second term is the bias term. By the definition of $\overline{I_t}$,
$
\mathbb{E}\left[
\overline{I_t}  
S_{j,k}(\beta)
\right] = \frac{1}{K}\sum_{\ell=1}^K 
\mathbb{E}\left[
I_t^{(\ell)}   
S_{j,k}(\beta)
\right].
$
Because of independence, if $\ell \neq k$,
$
\mathbb{E}\left[
I_t^{(\ell)}   
S_{j,k}(\beta)
\right] =\mathbb{E}[I_t^{(\ell)}]
\mathbb{E}\left[
S_{j,k}(\beta)
\right] = 0,
$ meaning the only surviving term is $\ell = k$.
Thus, the bias term becomes
$
\frac{1}{K}
\sum_{j=0}^{T-1} \sum_{t=j+1}^T 
\mathbb{E}\left[
I_t^{(k)}
S_{j,k}(\beta)
\right] = \frac{1}{K} \frac{d \mathbb{E}[Z(v,\beta,T)]}{d\beta}.
$ Plugging this back into the equation for $\mathbb{E}[\widehat{J}_{\beta, K}^{\text{VR}}]$ yields the desired equality.
\qed

%% file: parameter_uncert.tex
\vspace{-6pt}
\section{
\emph{Parametric uncertainty}} \label{sec:parameter_uncertainty}

Assuming we are given a probability distribution for $v$ and $\beta$ representing the uncertainty from parameter estimation, we wish to incorporate the uncertainty in $v$ and $\beta$ into both sensitivity analysis and optimization. Examples of fitting SIR models to data in a Bayesian context can be found in \cite{capistran2012towards}. The following discussion of optimization under parameter uncertainty can also be contextualized in light of the Bayesian risk optimization framework (\cite{wu2018bayesian}); in particular, a gradient-based approach was proposed in \cite{cakmak2021solving}.

First, consider the context where both parameters are uncertain, focusing on the application to sensitivity analysis. Let $(\hat v,\hat \beta)$ be the (random) parameters satisfying $\mathbb{E}[\hat v] =v_0$, $\mathbb{E}[\hat \beta] = \beta_0$. We are interested in the \textit{expected sensitivities}
$
\xi_v(v_0,\beta_0,T) = \mathbb{E}_{\hat v, \hat \beta} \left[J_v(\hat v, \hat \beta, T)\right]$,
$
\xi_\beta(v_0,\beta_0,T) = \mathbb{E}_{\hat v, \hat \beta} \left[J_\beta(\hat v, \hat \beta, T)\right],
$
which are the derivatives averaged across the uncertainty in the parameters. One can interpret the expected sensitivity as the expected effect of a small change in $v$ or $\beta$ averaged across all plausible parameter values. 

Next, we consider the optimization context. Here, one of the parameters $v$ and $\beta$ is random, and the other variable is a decision variable (and thus is known). The \textit{derivatives under parameter uncertainty} are
$
J_v(\hat \beta) = \frac{d}{dv} \mathbb{E}_{\hat \beta}\left[ \mathbb{E}[Z(v,\hat \beta, T)]\right],
$
$
J_\beta(\hat v) = 
\frac{d}{d\beta} \mathbb{E}_{\hat v}\left[ \mathbb{E}[Z(\hat v, \beta, T)]\right].
$
To show that the gradient estimators $\widehat J_v$ and $\widehat J_\beta$ remain unbiased under parameter uncertainty, we need to show the interchange of derivative through the outer expectations \cite[Section 5.3.5]{fu2014stochastic}. 
In Proposition \ref{prop:param_uncert}, we verify that the expected sensitivities are finite, and that the gradient estimators under parameter uncertainty are unbiased. The following lemma is used in Proposition \ref{prop:param_uncert} and bounds the score function (\ref{eqn:score_function}). 

\begin{lemma}
	\label{lemma:score_bound}
$
\left|\frac{\partial \log P(\mathbf{X}_{j+1} \mid \mathbf{X}_j; \beta)}{\partial \beta}\right|
\leq
\frac{N}{\beta}.
$
\end{lemma}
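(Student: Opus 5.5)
The plan is to work directly from the explicit score function (\ref{eqn:score_function}) and bound each of its three cases separately. The third case (score $0$) is trivial. Throughout I write $\lambda=\beta s i/N$ for the negative-binomial mean and use $0<i$, $0<s\le N$, $\beta>0$.

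\emph{Untruncated case} ($s'=s-i'$, $i'<s$). First simplify the score algebraically. Since $\frac{si}{N}\cdot\frac{i'}{\lambda}=\frac{i'}{\beta}$ and $\frac{si}{N}\cdot\frac{i'+i}{i+\lambda}=\frac{s(i'+i)}{N+\beta s}$, putting these over a common denominator and using $\beta s i=\lambda N$ gives
$$
g(i') := \frac{\partial \log P_Y(i'\mid s,i;\beta)}{\partial\beta}=\frac{N\,(i'-\lambda)}{\beta\,(N+\beta s)}.
$$
Then $|g(i')|\le \frac{N\max(i',\lambda)}{\beta(N+\beta s)}$, and I bound the two pieces as follows.
\begin{itemize}
\item The $i'$ piece: $\frac{N i'}{\beta(N+\beta s)}\le \frac{i'}{\beta}<\frac{s}{\beta}\le \frac{N}{\beta}$.
\item The $\lambda$ piece: $\frac{N\lambda}{\beta(N+\beta s)}=\frac{si}{N+\beta s}\le\frac{si}{\beta s}=\frac{i}{\beta}\le\frac{N}{\beta}$.
\end{itemize}
This settles the untruncated case.

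\emph{Truncated case} ($s'=0$, $i'=s$). Because $\sum_{j\ge 0}\partial_\beta P_Y(j)=0$, the numerator in (\ref{eqn:score_function}) equals $\sum_{j\ge s} g(j)P_Y(j)$. Hence the score is $\mathbb{E}[g(Y)\mid Y\ge s]$ with the same affine function $g$. The lower bound is immediate: $g(Y)\ge g(0)=-\frac{N\lambda}{\beta(N+\beta s)}\ge -\frac{N}{\beta}$ by the $\lambda$ estimate above. For the upper bound it suffices to show
$$
\mathbb{E}[Y\mid Y\ge s]-\lambda\le N+\beta s .
$$

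This upper bound in the truncated case is the main obstacle, since $Y$ is unbounded on $\{Y\ge s\}$. I would try the following three routes, in order.
\begin{enumerate}
\item Use that the negative binomial has a mean-residual-life property. Write $Y$ as a sum of $i$ i.i.d.\ geometric variables (each with mean $\lambda/i$), or as a Poisson--gamma mixture. Then an overshoot bound should give $\mathbb{E}[Y-s\mid Y\ge s]\le \mathbb{E}[Y]+$ (something of order $i\lambda/i$), and hence $\mathbb{E}[Y\mid Y\ge s]-\lambda\le s+\lambda\cdot c$. Since $s\le N$, it would remain to show that this extra term is at most $\beta s$, which must be checked.
\item If that fails, differentiate the tail directly. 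Write $P(Y\ge s)=I_p(s,i)$, the regularized incomplete beta function with $p=\lambda/(i+\lambda)$. Its derivative is the beta density $p^{s-1}(1-p)^{i-1}/B(s,i)$, so the score becomes $\frac{dp}{d\beta}\,\frac{p^{s-1}(1-p)^{i-1}}{B(s,i)\,I_p(s,i)}$. I would then bound the density-to-tail ratio by elementary comparison.
\item Fallback: if the constant $N/\beta$ is too tight in this case, I would note that only a finite bound uniform on compact $\beta$-sets is needed for Proposition \ref{prop:param_uncert}. I would flag that fact rather than force the sharp constant.
\end{enumerate}
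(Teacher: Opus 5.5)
Your untruncated case is correct. The simplification $g(i')=N(i'-\lambda)/(\beta(N+\beta s))$ checks out, and you bound both signs, which the paper's one-line estimate leaves implicit (it drops the negative term and only shows $g(i')\le i'/\beta$). Your rewriting of the truncated score as $\mathbb{E}[g(Y)\mid Y\ge s]$ via $\mathbb{E}[g(Y)]=0$ is also correct, as is the reduction to $\mathbb{E}[Y\mid Y\ge s]-\lambda\le N+\beta s$.

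\textbf{The gap.} That inequality is the only nontrivial content of the lemma, and the proposal does not prove it: you list three routes and carry out none. Route 3 abandons the statement. Its justification is also off: the paper uses exactly the $1/\beta$ form of the bound, together with $\mathbb{E}[1/\hat\beta]<\infty$, to get part (ii) of Proposition \ref{prop:param_uncert}. There $\hat\beta$ ranges over all of $(0,\infty)$, so a bound uniform only on compact $\beta$-sets would not plug into that argument.

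\textbf{How to close it.} The missing step is short.

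Route 2 is precisely the paper's proof, and the elementary comparison it needs is this. Since $i\ge 1$, $(1-t)^{i-1}\ge(1-p)^{i-1}$ for $t\in[0,p]$, so $\int_0^p t^{s-1}(1-t)^{i-1}\,dt\ge(1-p)^{i-1}p^s/s$. Hence the density-to-tail ratio is at most $s/p$. With $p=\beta s/(N+\beta s)$ and $dp/d\beta=sN/(N+\beta s)^2$, the score (which is nonnegative) is at most $\frac{s}{p}\cdot\frac{sN}{(N+\beta s)^2}=\frac{sN}{\beta(N+\beta s)}\le\frac{s}{\beta}\le\frac{N}{\beta}$.

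Route 1 also closes, with no extra term at all. Write $Y=G_1+\cdots+G_i$ with i.i.d.\ geometrics of mean $\lambda/i$. Condition on the first index $k\le i$ at which the partial sum reaches $s$, together with $G_1,\dots,G_{k-1}$. By memorylessness the overshoot $G_1+\cdots+G_k-s$ has conditional mean $\lambda/i$, and $G_{k+1},\dots,G_i$ are unaffected. So the conditional mean of $Y-s$ is $(i-k+1)\lambda/i\le\lambda$. Averaging over $k$ gives $\mathbb{E}[Y\mid Y\ge s]-\lambda\le s\le N$, which is stronger than your target and reproduces the paper's bound $\frac{sN}{\beta(N+\beta s)}$. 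This is the decreasing-mean-residual-life property you were reaching for. Even a residual term $c\lambda$ with $c\le 1$ would have sufficed in your bookkeeping, since $\lambda=\beta si/N\le\beta s$.
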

\noindent \textit{Proof.} In the second case of (\ref{eqn:score_function}), we obtain the bound
$
\frac{si}{N} \left(
	\frac{i'}{\beta si /N} -
	\frac{i' + i}{i + \beta si /N}
\right)
\leq
\frac{si}{N} \frac{i'}{\beta s i /N}
\leq N/\beta,
$
using that $0 < s,i,i'\leq N$ with probability one. The expression in the first case of (\ref{eqn:score_function}) is equivalent to 
$
\frac{\partial}{\partial \beta} \log \mathbb{P}(Y \geq s ; \beta, s, i).
$
Here, we use the parameterization of the negative binomial where $r$ is the number of successes before the experiment is stopped, and $p$ is the probability of success.
Then $Y$ is negative-binomial with $r=i$ and $p=i/(i + \beta si / N)$. Define
$
q = 1-p = \frac{\beta s}{N + \beta s},
$
and recalling that the CDF of a negative-binomial is given by the regularized incomplete beta function $I_p$,
$
\mathbb{P}(Y \geq s ; \beta, s, i) = I_q(s,i) = {\frac{1}{B(s,i)} \int_0^{q} t^{s-1} (1-t)^{i-1} \text{ d}t},
$
where $B(\cdot,\cdot)$ is the beta function.
By the chain rule,
\begin{align*}
\frac{\partial}{\partial \beta} \log \mathbb{P}(Y \geq s ; \beta, s, i)
&=
\left[I_q(s,i)\right]^{-1} \frac{\partial}{\partial \beta}[I_q(s,i)] \frac{\partial q}{\partial \beta} \\
&=
\left[
	\int_0^q t^{s-1}(1-t)^{i-1} \text{ d}t
\right]^{-1}
\left[
	q^{s-1}(1-q)^{i-1}
\right]
\left[
	\frac{s}{N + \beta s} - \frac{\beta s^2}{(N+\beta s)^2}
\right] \\
&\leq
\left[ 
	\frac{s}{q^s} \frac{1}{(1-q)^{i-1}}
\right]
\left[
	q^{s-1}(1-q)^{i-1}
\right]
\left[
	\frac{s}{N + \beta s} - \frac{\beta s^2}{(N+\beta s)^2}
\right] \\
&= \frac{s}{q}
\left[
	\frac{s}{N + \beta s} - \frac{\beta s^2}{(N+\beta s)^2}
\right]
=
\frac{1}{\beta} \left[s - \frac{\beta s^2}{N + \beta s}\right] \leq N/\beta.
\qed
\end{align*}

\begin{prop}
    \label{prop:param_uncert}
    Assume that $\hat v$ is a random variable with support in $(0,1)$, and that $\hat \beta > 0$ with $\mathbb{E}[1 /\hat \beta] < \infty$. Then,
	\textbf{(i)} $\mathbb{E}_{\hat v, \hat \beta} \left[\frac{d}{dv}\mathbb{E}[Z(\hat v,\hat \beta,T)]\right] < \infty,$
	\textbf{(ii)} $\mathbb{E}_{\hat v, \hat \beta} \left[\frac{d}{d\beta}\mathbb{E}[Z(\hat v,\hat \beta,T)]\right] < \infty \label{eqn:beta_uncert_finite},$
	\textbf{(iii)} $\mathbb{E}_{\hat \beta} [\widehat{J}_{v}( v, \hat \beta, T)] 
	    = J_v(\hat \beta) \label{eqn:v_uncert_unbiased},$
	\textbf{(iv)}  $\mathbb{E}_{\hat v} [\widehat{J}_{\beta}(\hat v, \beta, T)] 
	    = J_\beta(\hat v)$.
\end{prop}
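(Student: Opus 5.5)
The plan is to reduce all four claims to uniform bounds on $J_v$ and $J_\beta$ as functions of the parameters. The state space $\mathcal{S}^T$ is finite and $0\le Z\le N$ almost surely, so each bound can be obtained path-by-path inside the representation (\ref{eqn:cost_fn}). Parts (iii) and (iv) then follow from dominated convergence: differentiating under $\mathbb{E}_{\hat\beta}$ or $\mathbb{E}_{\hat v}$ only requires an integrable bound on the inner derivative that holds uniformly in a neighbourhood of the decision variable. Combined with Propositions \ref{prop:wd_unbiased} and \ref{prop:lr_unbiased}, conditionally on the sampled parameter, this gives unbiasedness.

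\emph{Bound for $J_v$.} By Proposition \ref{prop:wd_unbiased}, $J_v(v,\beta,T)=(N-i_0)\mathbb{E}[Z^+-Z^-]$ for every $(v,\beta)$. Since $0\le Z^\pm\le N$, this gives $|J_v|\le N(N-i_0)$, a constant independent of $(v,\beta)$. Hence $\mathbb{E}_{\hat v,\hat\beta}|J_v(\hat v,\hat\beta,T)|\le N^2<\infty$, which is (i); the support assumption on $\hat v$ only guarantees that the binomial decomposition applies. For (iii), $\beta\mapsto$ nothing is differentiated in $\beta$; we differentiate $v\mapsto \mathbb{E}_{\hat\beta}[J(v,\hat\beta,T)]$. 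The map $v\mapsto J(v,\beta,T)$ is a polynomial in $v$ for each fixed $\beta$, by (\ref{eqn:cost_fn}), and its derivative is bounded by $N^2$ uniformly in $\beta$. So the mean value theorem and dominated convergence justify
\[
\frac{d}{dv}\mathbb{E}_{\hat\beta}[J(v,\hat\beta,T)]=\mathbb{E}_{\hat\beta}[J_v(v,\hat\beta,T)]=\mathbb{E}_{\hat\beta}\big[\mathbb{E}[\widehat J_v(v,\hat\beta,T)\mid\hat\beta]\big].
\]
The last equality uses Proposition \ref{prop:wd_unbiased}, applied conditionally.

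\emph{Bound for $J_\beta$.} By Proposition \ref{prop:lr_unbiased}, $J_\beta=\mathbb{E}[Z\sum_{j=0}^{T-1}\partial_\beta\log P(\mathbf{X}_{j+1}\mid\mathbf{X}_j;\beta)]$. Lemma \ref{lemma:score_bound} bounds each score term by $N/\beta$, and $Z\le N$, so $|J_\beta(v,\beta,T)|\le TN^2/\beta$. Thus $\mathbb{E}|J_\beta(\hat v,\hat\beta,T)|\le TN^2\,\mathbb{E}[1/\hat\beta]<\infty$, which is (ii). For (iv), fix $\beta>0$ and restrict to $\beta'\in[\beta/2,2\beta]$. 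There $|J_\beta(v,\beta',T)|\le 2TN^2/\beta$ uniformly in $v$, so the difference quotients of $\beta'\mapsto J(\hat v,\beta',T)$ are dominated by a constant. Dominated convergence then gives
\[
\frac{d}{d\beta}\mathbb{E}_{\hat v}[J(\hat v,\beta,T)]=\mathbb{E}_{\hat v}[J_\beta(\hat v,\beta,T)],
\]
and conditional application of Proposition \ref{prop:lr_unbiased} gives (iv).

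\emph{Main obstacle.} The main obstacle is the regularity needed for the mean value argument in $\beta$. We need $\beta\mapsto J(v,\beta,T)$ to be continuously differentiable, so that the derivative bound controls the difference quotients. I would get this from (\ref{eqn:cost_fn}): $J$ is a finite sum of products of smooth functions of $\beta>0$. These are the negative-binomial masses $P_Y$ and the truncation terms $1-\sum_j P_Y$. The truncation terms stay positive whenever they appear in the score, so the logs and the score formula (\ref{eqn:score_function}) are well defined. A secondary point is that the bound in Lemma \ref{lemma:score_bound} holds uniformly over all states visited, which the lemma asserts with probability one.
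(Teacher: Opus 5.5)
Your proposal is correct and follows essentially the same route as the paper. You bound $|J_v|\le N(N-i_0)$ via the WD representation and $|J_\beta|\le TN^2/\beta$ via the LR representation together with Lemma~\ref{lemma:score_bound}, then use the mean value theorem and dominated convergence to pass the derivative through the outer expectation, and finish with Propositions~\ref{prop:wd_unbiased} and \ref{prop:lr_unbiased} applied conditionally. Your handling of (iv), which restricts to a neighbourhood $[\beta/2,2\beta]$ of the fixed decision variable, just makes explicit what the paper calls a ``similar application'' of dominated convergence.
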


\noindent\textit{Proof.}
First, we analyze the derivative with respect to $v$. For all $v \in (0,1)$, $\beta > 1$,
$\left|\frac{d}{dv}\mathbb{E}[Z]\right| = |(N-i_0) \mathbb{E}[Z^+ - Z^-]| \leq 2N^2$
since $Z^+ \leq N$, $Z^- \leq N$. Thus, $\frac{d}{dv}\mathbb{E}[Z(v,\beta, T)]$ is a bounded function, implying \textbf{(i)}. By the mean value theorem, observe that for all $\beta$,
$$
\frac{\mathbb{E}[Z(v+ \varepsilon, \beta, T) - Z(v, \beta, T)]}{\varepsilon}
\leq \sup_{\tilde v \in [v, v+\varepsilon]} \frac{d}{dv} \mathbb{E}[Z(\tilde v,\beta, T)] \leq 2N^2,
$$
enabling the derivative to pass through the expectation with respect to $\hat \beta$ by the dominated convergence theorem. By Proposition \ref{prop:wd_unbiased}, we can exchange derivative with the inner expectation to obtain \textbf{(iii)}.

Next, we analyze the derivative with respect to $\beta$. Using (\ref{eqn:LR_est}) and $Z \leq N$,
\begin{equation}
\left|\frac{d}{d\beta} \mathbb{E}[Z(v,\beta, T)]\right| \leq
N \sum_{t=0}^{T-1} \mathbb{E}\left|\frac{\partial \log P(\mathbf{X}_{j+1} \mid \mathbf{X}_j; \beta)}{\partial \beta}\right|.
\label{eqn:beta_bound}
\end{equation}

Using Lemma \ref{lemma:score_bound} and substituting $\hat \beta$ for $\beta$, the expectation of the right-hand side of (\ref{eqn:beta_bound}) is finite by assumption, implying \textbf{(ii)}. By a similar application of the dominated convergence theorem, the interchange over the outer expectation with respect to $\hat v$ is permissible. Using Proposition \ref{prop:lr_unbiased}, we exchange with the inner expectation to obtain \textbf{(iv)}.\qed

This proposition justifies a two-stage sampling procedure to estimate the expected sensitivities and derivatives under parameter uncertainty. Parameter values are first drawn from their distributions, and epidemic trajectories are then simulated conditionally on these parameters. In this work, we adopt a single-level Monte Carlo scheme. For each sampled parameter pair $(\hat v^{(k)}, \hat \beta^{(k)})$, we generate a single replication of the gradient estimator. Thus, drawing $K$ samples from the distribution of $(\hat v, \hat \beta)$ yields $K$ realizations of the estimator, using the variance-reduced estimators $\widehat J^{\text{VR}}_{\beta,K} $ and $\widehat J^{\text{CRN}}_{v,K}$ in Sections \ref{sec:wd_grad} and \ref{sec:LR_grad}.

%% file: numerical_sec_intro.tex
\vspace{-6pt}
\section{Numerical experiments} \label{sec:numerical}

In Section \ref{sec:gradients_numerical}, we compare the proposed estimators against the following one-sided finite-difference (FD) estimators:
$$
\widehat J_{v, \varepsilon_v}^{FD}
=
\frac{1}{K}\sum_{k=1}^K
\frac{Z_k(v+\varepsilon_v, \beta, T)-Z_k(v,\beta,T)}{\varepsilon_v},
\quad
\widehat J_{\beta, \varepsilon_\beta}^{FD}
=
\frac{1}{K}\sum_{k=1}^K
\frac{Z_k(v,\beta+\varepsilon_\beta,T)-Z_k(v,\beta,T)}{\varepsilon_\beta},
$$
and final size equations (FSE). In Section \ref{sec:sensitivity_analysis}, we use the proposed estimators to investigate how the sensitivities vary over the parameter space and how parameter uncertainty changes the sensitivity landscape. We interpret the estimated sensitivities in light of phenomena like herd immunity, and explain how the sensitivity can be used to estimate the marginal impact of interventions like vaccination or contact reduction.
In Section \ref{sec:optimization}, we illustrate an application of the proposed estimators through two stylized optimization problems, which investigate tradeoffs between the costs of infections and the costs of interventions from the viewpoint of a social planner. These examples also aim to demonstrate how the estimators can be used to examine the qualitative effects of parameter uncertainty on resulting policies.

%% file: sensitivity_results.tex
\vspace{-6pt}
\subsection{\emph{Comparison to finite-differences \& final size equations}} 
\label{sec:gradients_numerical}

Table~\ref{tab:grad_est_comparison} reports the results for the gradient estimators with and without variance reduction, the finite-difference estimators, and the final size equations, for four sets of values of $v$ and $\beta$, averaged over 10{,}000 independent replications, with $N=1000$, $T=10$, and $i_0=1$ for all simulations, where FSE is calculated using the approach described in Section \ref{sec:final_size_equations}; WD-IND is WD with the pairs $Z^+_k$ and $Z^-_k$ simulated independently; WD-CRN is WD with $Z^+_k$ and $Z^-_k$ simulated using CRN, as described at the end of Section \ref{sec:wd_grad}; LR uses (\ref{eqn:LR_est}); LR-VR is LR with variance reduction, using (\ref{eqn:LR_est_VR}); FD-CRN means finite-difference with CRN, using the same construction as WD-CRN, and
using perturbation sizes of $\varepsilon_v = 0.005$ and $\varepsilon_\beta = 0.05$.
\begin{table}[h]
\centering
\label{tab:grad_est_comparison}

\sisetup{
    round-mode = figures,
    round-precision = 3,
    scientific-notation = true,
    output-exponent-marker = \mathrm{e},
}

\begin{tabular}{l S @{\ } l S @{\ } l S @{\ } l S @{\ } l}
\multicolumn{9}{l}{\textbf{Estimates of $J_v$}} \\
	Estimator & \multicolumn{2}{c}{$v=0.1, \beta=2$} & \multicolumn{2}{c}{$v=0.6, \beta=2$} & \multicolumn{2}{c}{$v=0.1, \beta=4$} & \multicolumn{2}{c}{$v=0.6, \beta=4$} \\
\midrule
	FSE    & -1533.30 &           & 0.00    &           & -1171.76 &           & -1517.49 &                 \\
	WD-IND & -34.5654   & (\num{3894.75}) & 60.6393   & (\num{119.76})  & -1997.5005 & (\num{5672.786}) & -196.1037  & (\num{1544.656}) \\
	WD-CRN & -1283.515  & (\num{114.138})   & -23.976  & (\num{3.3794})    & -990.9081  & (\num{155.4935})   & -862.9362  & (\num{45.4427})   \\
	FD-CRN & -1153.74 & (\num{38.57687})   & -27.24  & (\num{2.38528})    & -980.34  & (\num{67.55363})   & -889.26  & (\num{25.217})   \\
\midrule
\multicolumn{9}{l}{\textbf{Estimates of $J_\beta$}} \\
Estimator & \multicolumn{2}{c}{$v=0.1, \beta=2$} & \multicolumn{2}{c}{$v=0.6, \beta=2$} & \multicolumn{2}{c}{$v=0.1, \beta=4$} & \multicolumn{2}{c}{$v=0.6, \beta=4$} \\
\midrule
	FSE    & 448.58  &           & 0.00    &           & 52.16    &           & 110.53   &                 \\
	LR     & 446.624  & (\num{29.857})   & 5.264    & (\num{0.3689})    & 14.757    & (\num{25.005})   & 76.737    & (\num{3.885})    \\
	LR-VR  & 397.7314  & (\num{13.247})   & 5.31367    & (\num{0.2837})    & 61.38626    & (\num{8.67})    & 69.897    & (\num{2.06385})    \\
	FD-CRN & 451.054  & (\num{9.165})   & 5.37    & (\num{0.32656})    & 57.122    & (\num{9.96198})   & 73.53    & (\num{2.235})    \\
\bottomrule
\end{tabular}
	\caption{\textbf{Comparison of gradient estimators for selected values of $v$ and $\beta$, based on 10,000 independent replications.} Standard errors are shown in parentheses. The FD-CRN estimators used perturbation sizes of $\varepsilon_v=0.005$ and $\varepsilon_\beta=0.05$. We set $N=1000, T=10, i_0=1$ for all simulations. }

\label{tab:grad_est_comparison}
\end{table}

\textbf{Estimates of $J_v$}: The sign is primarily negative. The magnitude of most of the estimates appear to be on the order of $N=1000$, the size of the population, except for when $v=0.6$ and $\beta=2$. The WD-IND estimates have very large standard errors and sometimes give positive derivatives. Using CRN with the WD estimator substantially reduces the variance, but the WD-CRN estimates still have larger standard errors than the FD estimator. The WD-CRN and FD-CRN estimates agree reasonably well, considering the standard errors, but differ dramatically from the FSE and WD-IND estimates. Generally, the FSE estimates appear to have larger magnitudes, except when they vanish for $v=0.6$ and $\beta=2$. 

\textbf{Estimates of $J_\beta$}: The sign is primarily positive. The magnitude of the estimates is smaller compared to the estimates of $J_v$, and is at most half the population size; it is substantially smaller when $v=0.6$ and $\beta=2$. The LR estimates have more reasonable standard errors, compared to WD-IND. However, we still see a benefit when using LR-VR, as the variance is sometimes less than half of the variance of LR, resulting in similar performance to the FD-CRN estimator. The LR, LR-VR, and FD-CRN all agree reasonably well, considering the standard errors, but not with the FSE. The FSE estimates again generally appear to be larger, except when it vanishes again for $v=0.6$ and $\beta=2$. 

\textbf{Discussion.} Intuitively, $J_v$ should be negative, as increasing vaccinations decreases the number of infections. Likewise, $J_\beta$ should be positive, as increasing contacts increases the final size. This monotonicity with respect to $\beta$ is shown in Chapter 3 of \cite{Andersson_Britton_2000}. 

Because $\mathcal{R}_{\text{eff}} < 1$ when $v=0.6$ and $\beta=2$, no large outbreak can occur, and so the magnitude of $J_v$ and $J_\beta$ is smaller in this case. The FSE neglects this case due to the assumption that $\mathcal{R}_{\text{eff}} > 1$, yielding a derivative of 0, which explains the discrepancy with the gradient estimators. In the other cases, the final size equation may give a larger derivative due to the effect of the time horizon. Individuals may become infected past the finite time horizon of $T$, which is taken into account by the final size equations, but not by the gradient estimators. 

In summary, while the WD-CRN estimator for $J_v$ had higher variance in the simulation experiments, it is unbiased and has an attractive theoretical interpretation, whereas the FD-CRN estimator for $J_v$ had lower variance but is biased and requires selecting an appropriate perturbation size. Both have similar computational cost, as both require two simulations.
For estimating $J_\beta$, the LR-VR estimator is superior to the FD-CRN estimator, as it only requires one simulation run per replication, has similar variance, and is unbiased, or in the case of the variance-reduced estimator, asymptotically unbiased.

\vspace{-6pt}
\subsection{
\emph{Sensitivity analysis under parametric uncertainty}}\label{sec:sensitivity_analysis}
We estimate the derivatives according to the FSE, the WD-CRN, and the LR-VR, again taking $N = 1000$, $i_0=1$, and $T=10$. We consider a grid of values with $\beta$ ranging between 2 and 4, and $v$ ranging between 0.01 and 0.99, using $K=10{,}000$ independent samples for the gradient estimators without parameter uncertainty. The averaged estimates are shown in Figure \ref{fig:contour_plots}.

We also estimate the expected sensitivity to assess the effect of parameter uncertainty. Uncertainty in $v$ is modeled using the Beta distribution, which is conjugate to the Binomial model for proportions. We set the shape parameters to be $p = C v_0$ and $q = C(1 - v_0)$, where $C > 0$ controls the concentration of the distribution. Under this parameterization, the mean is fixed at $v_0$, and increasing $C$ reduces the uncertainty by reducing the variance.
To model uncertainty in $\beta$, we use a Gamma distribution, which is conjugate to rate parameters in Poisson-type models. Let $a$ denote the shape parameter and $\lambda$ the rate parameter. We set $\lambda = a / \beta_0$, so that the mean is fixed at $\beta_0$. Under this parameterization, the variance decreases as $a$ increases, and the integrability condition required in Proposition \ref{prop:param_uncert} is satisfied provided that $a > 1$. Fixing $C = 10$ and $a = 10$, we let $v_0$ range between 0.01 and 0.99, and $\beta_0$ range between 2 and 4. The parameters are sampled independently, and we take $K=10,000$ independent replications as before. The estimated expected sensitivities are shown in Figure \ref{fig:contour_plots}.

\begin{figure}[h!]
	\centering
	\includegraphics[width=0.7\linewidth]{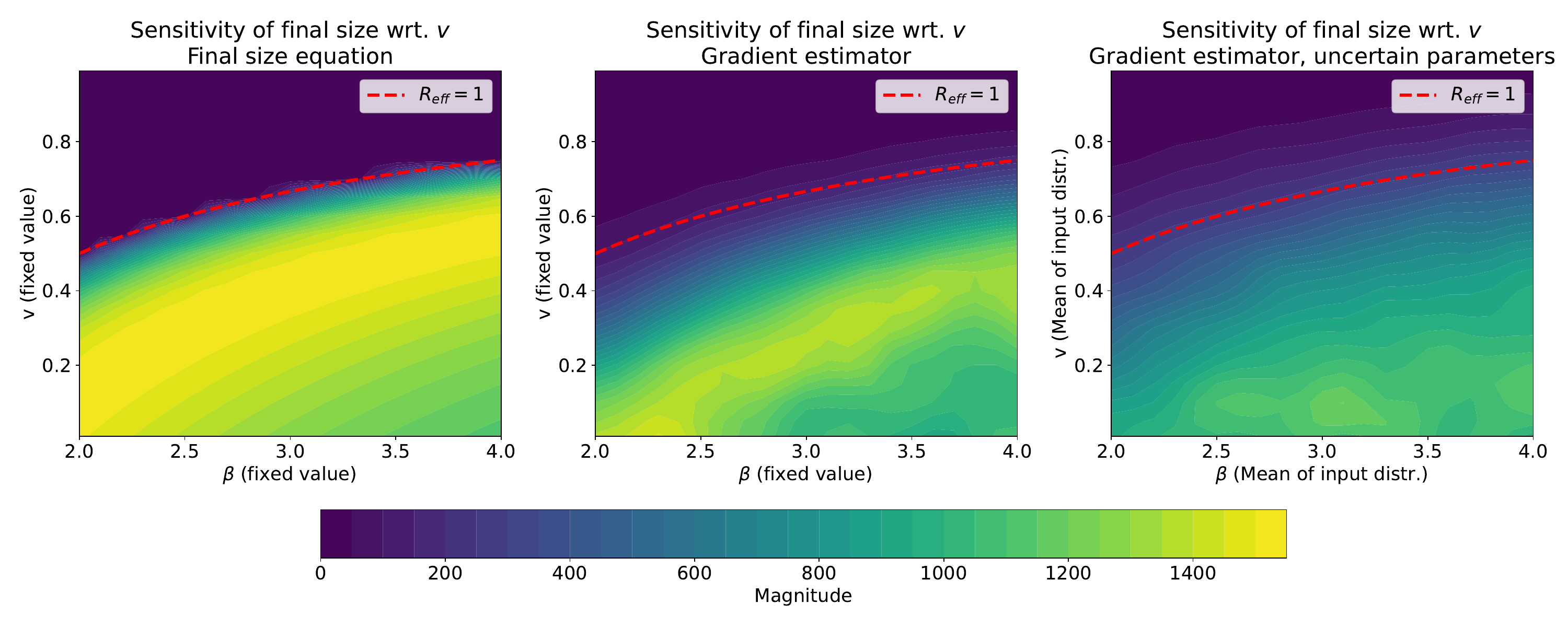}
	\includegraphics[width=0.7\linewidth]{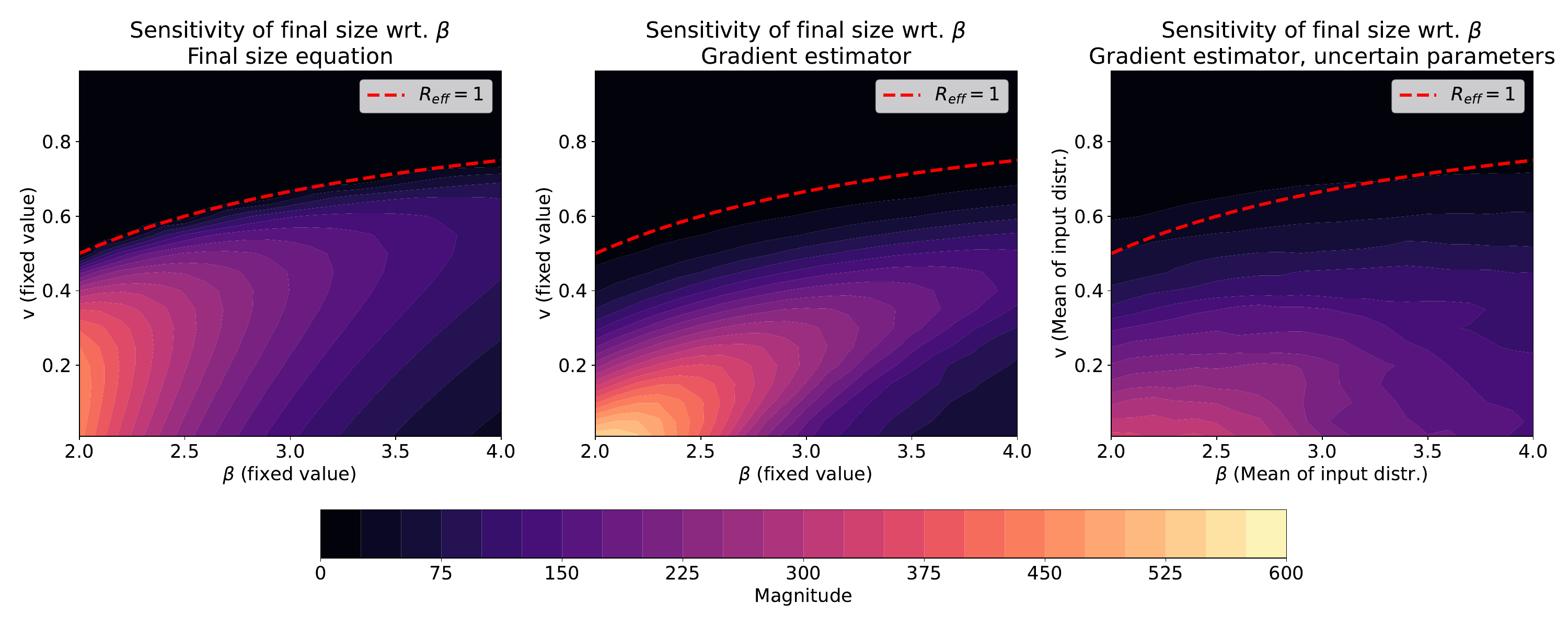}
	\caption{
		\textbf{Contour plots of the derivative-based sensitivity from different methods vs.\ the parameters $\beta$ and $v$.}
		The top row shows the estimated sensitivities of the total infections with respect to $v$, and the bottom row shows the estimated sensitivities of the total infections with respect to $\beta$. From left to right, we show the sensitivities from the final size equations; the derivatives $J_v$ and $J_\beta$; and the expected sensitivities $\xi_v$ and $\xi_\beta$. The gradient estimators $\widehat J_{v,K}^{\text{CRN}}$ and $\widehat J_{\beta,K}^{\text{VR}}$ are used to estimate the derivatives and expected sensitivities, with 10,000 independent replications used to estimate $J_v$, $J_\beta$, $\xi_v$, and $\xi_\beta$. The x- and y-axes are the values of $\beta$ and $v$, respectively; for the expected sensitivity the x- and y-axes are the means of the input distributions $\beta_0$ and $v_0$.
	}
	\label{fig:contour_plots}
\end{figure}

\textbf{Sensitivity with respect to $v$}: The sensitivity is small in all cases when $\mathcal{R}_\text{eff} < 1$. On the other hand, when $\mathcal{R}_\text{eff}>1$, the sensitivities are on order of the population size, $N=1000$. We observe that the final size equations yield the largest sensitivities, then the gradient estimator, then the gradient estimator under uncertainty. However, under parametric uncertainty, some contours stretch further into the region $\mathcal{R}_{\text{eff}} < 1$ compared to the final size equations or the gradient estimator. The result of the final size equation and the gradient estimator exhibit a peaked region when $\mathcal{R}_{\text{eff}} > 1$. However, this does not happen for the gradient estimator under parametric uncertainty.

\textbf{Sensitivity with respect to $\beta$:} The contours from the final size equation and the gradient estimator exhibit similar parabolic shapes and stretch out as $\beta$ increases. For each fixed $v$, the sensitivity as a function of $\beta$ has a single interior maximum; as $v$ increases, this maximizer shifts rightward and the peak flattens. The sensitivity is highest when $\beta$ and $v$ are both relatively low, and is zero when $\mathcal{R}_{\text{eff}} < 1$. The contours from the final size equation cover a larger region of the parameter space, whereas those from the gradient estimator are more concentrated near the center of the plot. Under parametric uncertainty, the profile of the contours becomes much flatter and no longer has a parabolic shape. For a fixed $v$, the sensitivity as a function of $\beta$ appears to decrease nearly monotonically, or only has a very small peak. In this case, some contours stretch beyond the boundary $\mathcal{R}_{\text{eff}} = 1$.

\textbf{Discussion}: The derivative with respect to $v$ can be interpreted via the indirect effects of vaccination. In addition to the direct effect of an individual being vaccinated (i.e., avoiding infection), vaccination prevents that individual from infecting others, enabling \textit{herd immunity} (\cite{metcalf2015understanding}). When $\beta$ is larger and $v$ is smaller (i.e., the bottom right corner of the plots), the derivative is around 1000. This corresponds to the change in infections if $v$ were increased by 1, i.e., if the entire population were vaccinated. Dividing the derivative by the size of the population yields approximately one infection prevented per vaccination, indicating no indirect benefits. On the other hand, if $v$ is around 0.2 and $\beta$ is near 3.0, the magnitude of the derivative is around 1400. The same scaling suggests that one vaccination prevents approximately 1.4--1.5 infections. \cite{Lin_Hamedmoghadam_Shorten_Stone_2024} investigated the direct and indirect effects of vaccination using final size equations. Under their framework, the indirect effects of vaccination as a function of $v$ exhibits a peak, similar to our results. They also found that increasing $\mathcal{R}_{\text{eff}}$ (which is equivalent to increasing $\beta$) shifts the peak toward larger values of $v$. Both observations are consistent with our numerical results.

However, under parametric uncertainty, these indirect effects become less apparent. Using the same scaling, incorporating parameter uncertainty reduces the net effect to approximately 0.8--1.2 infections prevented per vaccination. This can be interpreted as averaging the derivative with respect to the parameter distribution, which flattens out the sensitivity landscape. Practically, this suggests that if there is large uncertainty in the parameters, a planner should be wary about relying on indirect effects. The average effect of a vaccination over all plausible scenarios (as captured by the distributions of $\hat v$, $\hat \beta$) may not be as large as expected if the parameters were known. 

The behavior of the sensitivity with respect to $\beta$ can be interpreted as saturation. Because the population is finite, once a large fraction of individuals has been infected, further increases in $\beta$ have a decreasing marginal effect on the total number of infections. Thus, for fixed $v$, the sensitivity initially increases and then decreases as $\beta$ becomes large. $\beta$ saturates quickly when $v$ is small, and more slowly when $v$ is large. Additionally, when $v$ is small, the sensitivity peaks at a larger value, since the total pool of susceptibles is larger; when $v$ is large, the sensitivity peaks at a smaller value.

A similar marginal interpretation can be obtained by noting that the total number of contacts over the simulation horizon is approximately $N \beta T$. Thus, a unit decrease in $\beta$ corresponds to a reduction of $NT$ contacts. For $N=1000$ and $T=10$, this equals $10^4$ contacts; comparing this to peak derivative values in Figure \ref{fig:contour_plots} of about $600$ suggests a marginal reduction of roughly $0.06$ infections per contact in the most favorable regime. Under parametric uncertainty, this peak effect is roughly halved, indicating that the marginal effectiveness of contact-reducing interventions is also diminished when averaged across parameter uncertainties.

In effect, parameter uncertainty flattens the sensitivity profile across the parameter space. When the true parameter values are uncertain, it becomes more difficult to determine whether mechanisms such as herd immunity or saturation dominate. In terms of potential policy implications, these results suggest that stronger interventions may be required to achieve similar expected outcomes when incorporating parameter uncertainty. This helps explain the more conservative solutions obtained in the optimization analysis in Section~\ref{sec:optimization}. 


%% file: opt_results.tex
\subsection{
\emph{Optimization problems}} \label{sec:optimization}

First, consider a problem in which the decision variable is the proportion immunized ($v$). This problem, which we call the \textit{vaccination problem}, involves selecting an appropriate immunization level to balance the costs of immunization and infection. Assuming $\beta$ is uncertain and represented by the random variable $\hat \beta$,

\textbf{Problem 1 \textit{(Vaccination problem.)}}
\begin{equation}
	\quad
	\min_{v \in \mathcal{V}}\mathbb{E}[Z(v,\hat \beta,T)] + c_v\left(\frac{v}{1-v}\right) 
	\text{ s.t } 0 < v < 1.
	\label{eqn:opt_problem_1}
\end{equation}
where $c_v$ is a vaccination cost scaling parameter, and $\mathcal{V} \subset [0,1]$ is the feasible region. The cost term $c_v \, v/(1-v)$ is chosen to capture increasing marginal costs of vaccination as coverage approaches one. This reflects the intuition that achieving high levels of coverage typically requires greater effort, for example due to outreach or logistical challenges in reaching hesitant or hard-to-access populations. 

Next, consider a complementary problem in which the decision variable is the contact rate ($\beta$), which we call the \textit{contact problem}. This captures interventions such as social distancing or activity restrictions, which reduce transmission at the cost of limiting social and economic interactions. Assuming that $v$ is uncertain and represented by the random variable $\hat v$,

\textbf{Problem 2 \textit{(Contact problem.)}}
\begin{equation}
	\quad
	\min_{\beta \in \mathcal{B}} \mathbb{E}[Z(\hat v,\beta,T)] + c_\beta \left(\frac{1}{\beta - t_\beta}\right) 
	\text{ s.t } t_\beta < \beta,
	\label{eqn:opt_problem_2}
\end{equation}
where $t_\beta$ is a lower bound on the contact rate, $c_\beta$ is the cost scaling parameter, and $\mathcal{B} \subset (t_\beta, \infty)$ is the feasible region. This lower bound reflects the idea that a minimum level of interaction must be maintained to support essential activities and services. The cost term $c_\beta / (\beta - t_\beta)$ imposes increasing marginal costs as $\beta$ approaches $t_\beta$, capturing the intuition that further reductions in contact become progressively more difficult and disruptive as this lower bound is approached. 

We vary the cost parameters $c_v$ and $c_\beta$ and solve the resulting optimization problems via stochastic approximation (\cite{chau2014overview}). At each iteration, we average across $K=10$ replications of the gradient estimator using the sampling scheme described at the end of Section \ref{sec:parameter_uncertainty}. Fixing all other constants, we use recursions of the following form:
$$
v^{(n+1)} := \Pi_{\mathcal{V}}\left( v^{(n)} + \frac{A}{n+1} \left(\widehat J_{v, K}(v^{(n)}, \hat \beta, T) + c_v (1-v)^{-2}\right) \right),
$$
$$
\beta^{(n+1)} := \Pi_{\mathcal{B}}\left( \beta^{(n)} + \frac{A}{n+1} \left(\widehat{J}_{\beta, K}( \hat v, \beta^{(n)}, T) - c_\beta (\beta -t_{\beta})^{-2}\right) \right).
$$
where $\Pi$ represents projection onto the feasible region. 
For both problems, we perform 500 iterations of stochastic approximation for a given value of the cost parameter. We construct confidence intervals via 10 independent replications of the stochastic approximation procedure. 
As in previous simulations, we set $N=1000$, $i_0=1$, and $T=10$. We model parameter uncertainty in the same way as in Section \ref{sec:gradients_numerical}. We use an instance of the vaccination problem where we compare a fixed $\beta =5$ against uncertain $\hat \beta$ with shape parameter $a=10$ and mean $\beta_0=5$. For the contact problem, we consider a fixed $v=0.3$ versus uncertain $\hat v$ with a concentration parameter $C=10$ and $v_0=0.3$. The feasible regions are $\mathcal{V} = [0.0001, 0.9999]$ and $\mathcal{B} = [1.001, 1000]$. We use starting points of $v^{(0)} = 1 - 1/\beta = 0.8$ for the vaccination problem, and $\beta^{(0)} =1/(1-v) = 1.43$ for the contact problem. The lower bound in the contact problem is $t_\beta = 1$, and we use the same stepsize constant of $A = 1 \times 10^{-3}$ for both problems.

\begin{figure}
	\centering
	\includegraphics[width=0.46\linewidth]{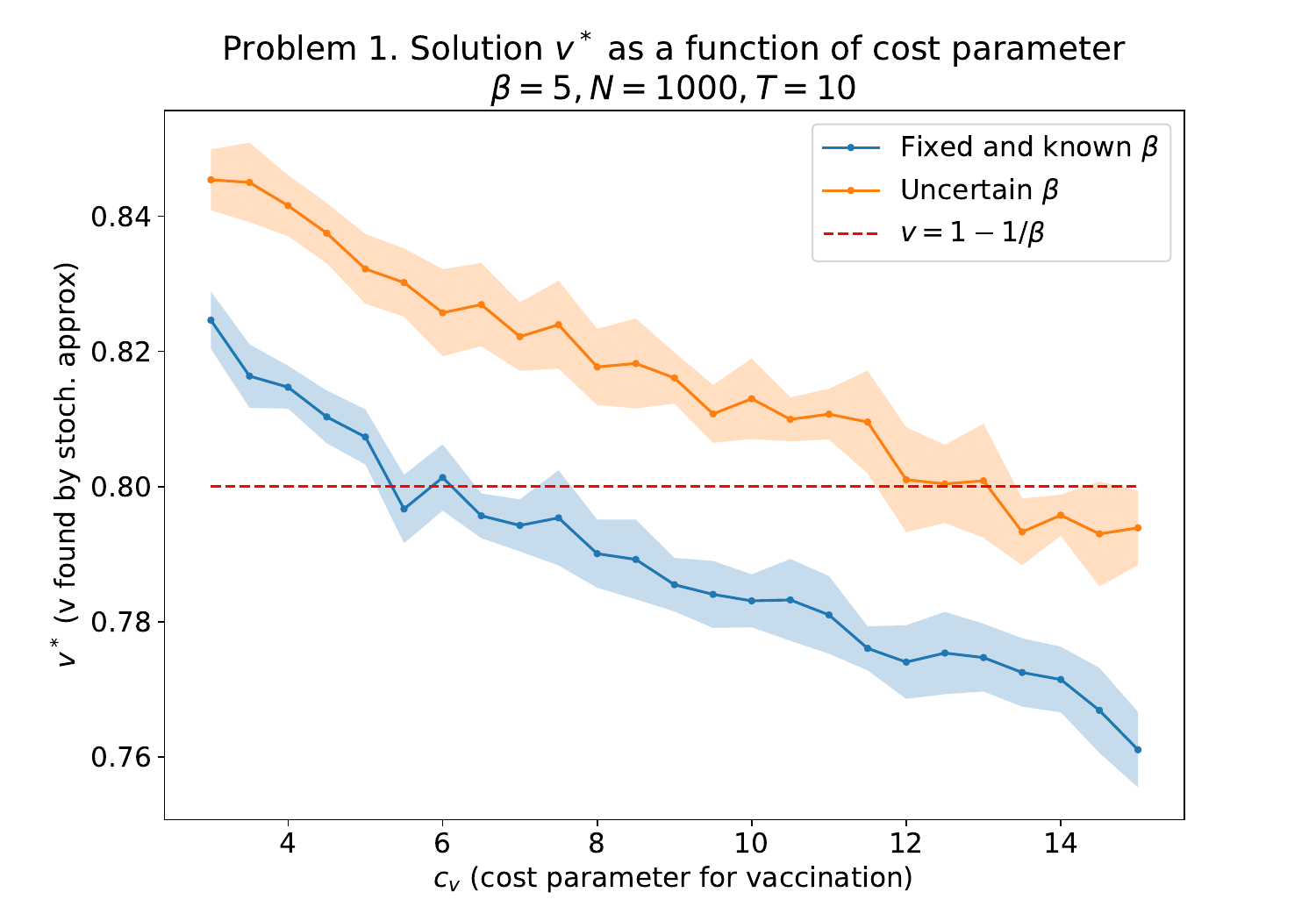}
	\includegraphics[width=0.46\linewidth]{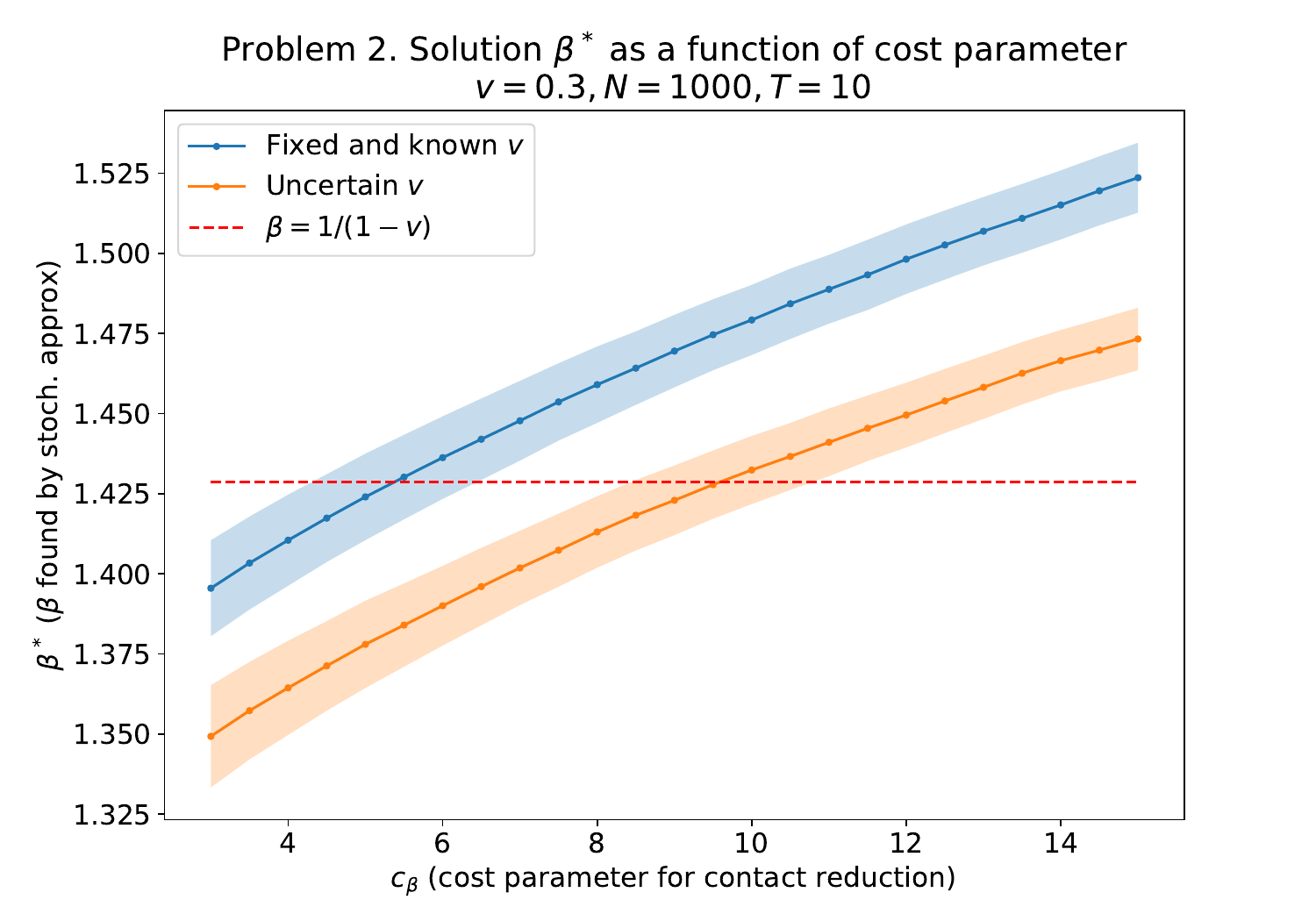}
	\caption{
	\textbf{Solutions to optimization problems as functions of intervention costs, with known vs. uncertain parameters, estimated via stochastic approximation.}
	The left panel shows the solution $v^*$ to the vaccination problem (\ref{eqn:opt_problem_1}) as a function of the cost parameter $c_v$. 
	The right panel shows the solution $\beta^*$ to the contact problem (\ref{eqn:opt_problem_2}) as a function of the cost parameter $c_\beta$. 
	In all plots, the shaded region is the 95\% CI from 10 independent replications of the stochastic approximation procedure, where each replication was run for 500 iterations.
}
	\label{fig:opt_soln}
\end{figure}

Figure \ref{fig:opt_soln} shows the optimization solutions as a function of the cost parameter.
We observe that the solution $v^*$ is a monotone decreasing function of the cost parameter $c_v$, and that a higher vaccination level is selected when the parameter $\beta$ is uncertain. Similarly, $\beta^*$ is a monotone increasing function of the cost parameter $c_\beta$, and a lower contact level is selected when $v$ is uncertain. The magnitude of the differences between the certain and uncertain solutions increase slightly as a function of the cost parameter, suggesting that nonlinearities in the problems increase the impact of parameter uncertainty.

These patterns are consistent with the intuition that more cautious decisions are required under uncertainty, leading to higher vaccination levels and lower contact rates. The vaccination level differs by as much as 2--4\%, which for $N=1000$ corresponds to vaccinating an additional 20--40 individuals. In the case of contact reduction, the optimal value of $\beta$ decreases by approximately 0.05 under uncertainty. Since the total number of contacts over the time horizon is on the order of $N \beta T$, this corresponds to roughly 500 fewer contacts, or a per capita reduction of approximately 0.5.

Each of the boundaries defined by the threshold conditions are shown as a dashed red line in the plots. We observe that the solutions without uncertainty cross the dashed red line at smaller values of the cost parameter than the solutions with uncertainty.
This crossing point can be interpreted as the value of the cost parameter at which it becomes optimal to accept $\mathcal{R}_{\text{eff}} > 1$ (i.e., nonzero probability of a large outbreak) rather than incur additional intervention costs. 
This observation indicates that the social planner should be less willing to make this tradeoff when parameters are uncertain.
Equivalently, this suggests that the willingness to pay for interventions is higher under uncertainty, as larger intervention costs are tolerated before allowing outcomes associated with $\mathcal{R}_{\text{eff}} > 1$.

\begin{figure}
	\centering
	\includegraphics[width=0.45\linewidth]{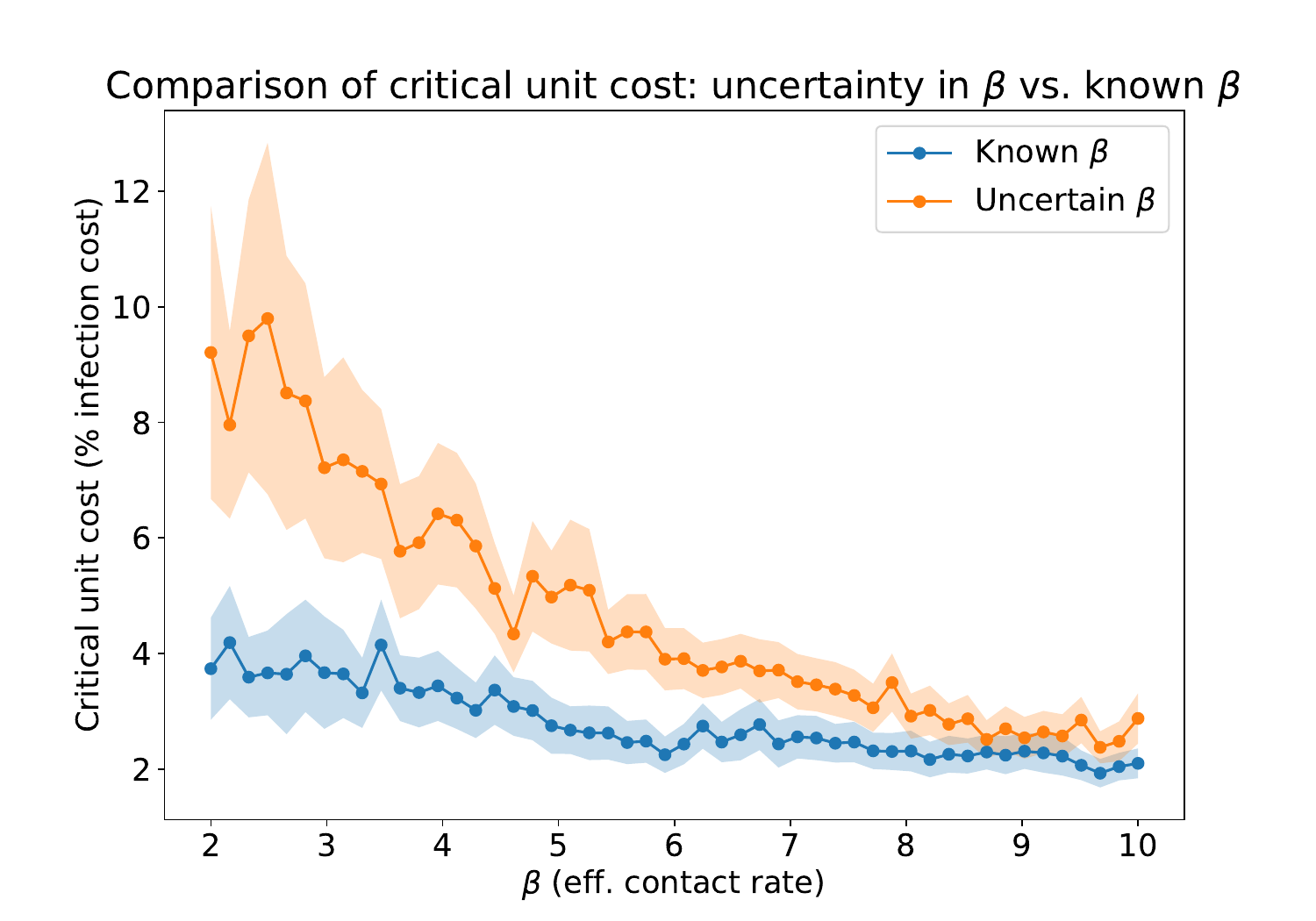}
	\includegraphics[width=0.45\linewidth]{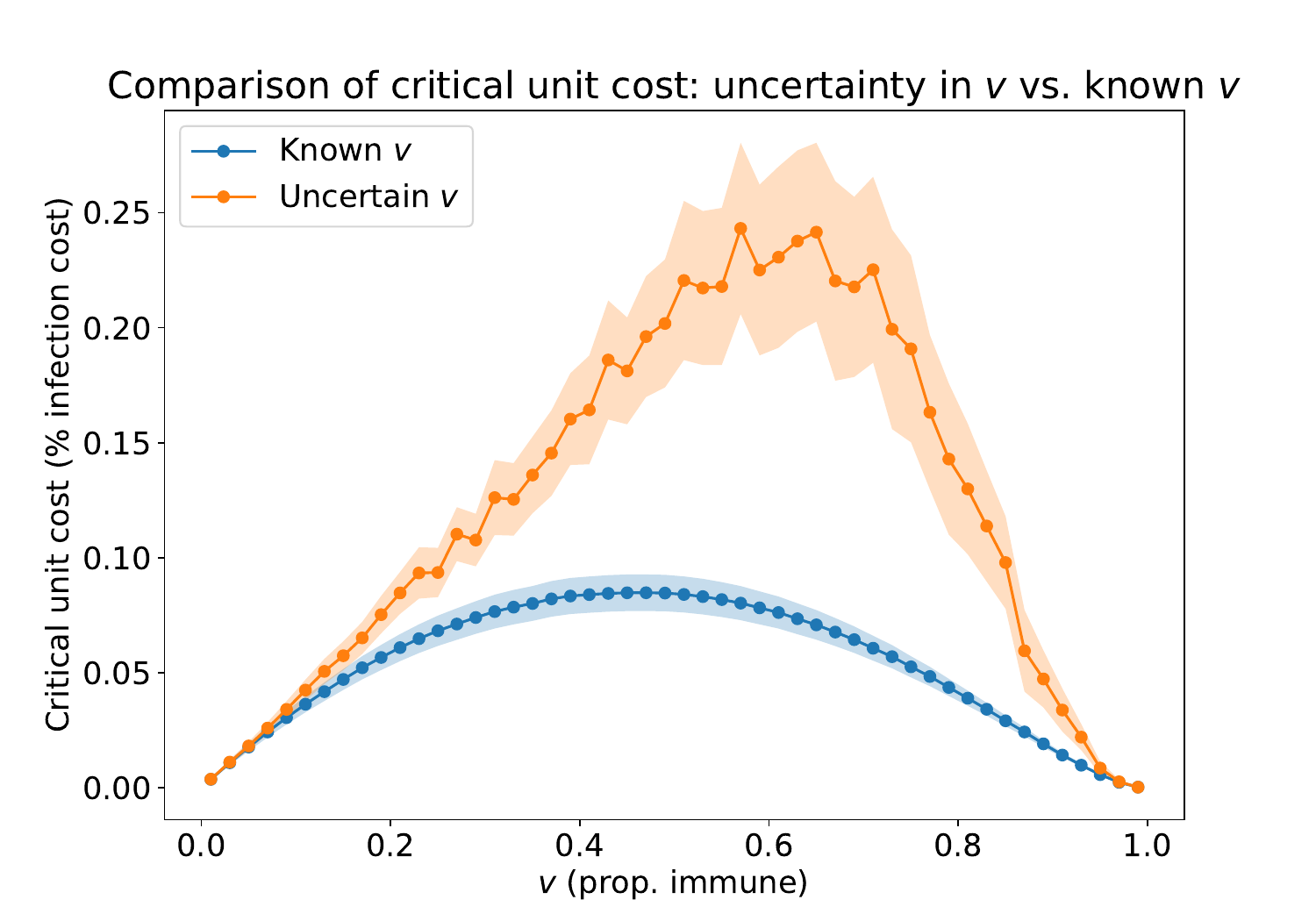}
	\caption{
		\textbf{Critical unit costs $C_v^*$ and $C_\beta^*$, as a function of $\beta$ and $v$, respectively. } The left plot shows estimates of $C_v^*$, corresponding to the vaccination problem. The right plot shows estimates of $C_\beta^*$, corresponding to the contact problem. The blue lines are the critical unit costs without parameter uncertainty, and the orange lines are the critical unit costs under parameter uncertainty, both estimated from 10,000 independent replications. Both plots have y-axes in units relative to the cost of one infection, expressed as a percentage. The shaded region is the 95\% CI, representing Monte-Carlo error.
	}
	\label{fig:critical_costs}
\end{figure}
	Motivated by this observation, we define the \textit{critical costs} $c_v^*$ and $c_\beta^*$ as the values of the cost parameters so that the equality $\mathcal{R}_{\text{eff}} = 1$ holds, given a fixed value of the other parameter. These are the largest values of the cost parameters that can be tolerated before allowing nonzero probability of a large outbreak. 
To find $c_v^*$ for a fixed $\beta$, we differentiate the objective in (\ref{eqn:opt_problem_1}) and set it equal to zero, set $v= 1-1/\beta$ (so that $\mathcal{R}_{\text{eff}}=1$), and solve for $c_v$. We can find $c_\beta^*$ in an identical manner: for a fixed $v$, we differentiate the objective in (\ref{eqn:opt_problem_2}), equate it to zero, set $\beta=(1-v)^{-1}$, and solve. 
In the general case where the parameters $v$ and $\beta$ are uncertain, and we have $\mathbb{E}[\hat \beta] = \beta_0$ and $\mathbb{E}[\hat v] = v_0$, we set 
$\beta = (1-v_0)^{-1}$ and $v = 1-1/\beta_0$, yielding the formulas:
$$
c_v^* = -\beta_0^{-2} \frac{d \mathbb{E}[Z(v,\hat \beta,T)]}{dv} \Big \lvert_{v = 1 - 1/\beta_0}, \quad
c_\beta^* = ((1-v_0)^{-1} - t_\beta)^2 \frac{d \mathbb{E}[Z(\hat v,\beta,T)]}{d\beta} \Big \lvert_{\beta = (1-v_0)^{-1}}.
$$
Thus, both $c_v^*$ and $c_\beta^*$ can be estimated by taking independent replications of the gradient estimators $\widehat J_v$ and $\widehat J_\beta$.
To give a cost-per-unit interpretation of the critical costs, we scale the total intervention cost by the magnitude of the intervention to obtain the \textit{critical unit cost}. For the vaccination problem, we divide the total cost of the intervention by the number of individuals vaccinated. If $v= 1-1/\beta$, then by plugging into the second term of (\ref{eqn:opt_problem_1}) we obtain a total cost of $c_v^*(\beta - 1)$, and a total of $Nv = N(1-1/\beta)$ individuals are immunized.  On the other hand, plugging $\beta = (1-v)^{-1}$ into the second term of (\ref{eqn:opt_problem_2}) yields a total cost of $c_\beta^*((1-v)^{-1} - t_\beta)^{-1}$, and the expected number of contacts over the time horizon is $N\beta T$. Thus, the critical unit costs are written as:
$$
C_v^* = -(\beta N)^{-1} \frac{d \mathbb{E}[Z(v,\beta,T)]}{dv} \Big \lvert_{v = 1 - 1/\beta},\quad
C_\beta^* = \frac{((1-v)^{-1} - t_\beta)}{NT(1-v)^{-1}} \frac{d \mathbb{E}[Z(v,\beta,T)]}{d\beta} \Big \lvert_{\beta = (1-v)^{-1}}.
$$
Since the cost terms are added directly to the expected number of infections, $C_v^*$ and $C_\beta^*$ are interpreted in units equivalent to the cost of a single infection.

	Figure \ref{fig:critical_costs} shows the critical unit costs estimated from 10,000 independent replications, as a function of $v$ and $\beta$, with $N=1000$, $i_0=1$, $T=10$. In both plots, the critical unit costs are larger under parametric uncertainty; thus, interventions become more valuable under uncertainty, as the social planner is willing to pay more per intervention. $C_v^*$ decreases as a function of $\beta$ and ranges from 4\% to 10\% of the cost of one infection, whereas $C_\beta^*$ is at most 0.25\% the cost of one infection: one vaccination is more valuable than the removal of one contact.
	For $C_v^*$, as $\beta$ increases, the vaccination threshold $v=1-1/\beta$ approaches 1, and the cost to vaccinate to this threshold increases without bound; the willingness of the social planner to pay for such an intervention goes to zero. 
	On the other hand, $C_\beta^*$ goes to zero as $v \rightarrow 0$ and $v \rightarrow 1$. 
	When $v \rightarrow 0$, $\beta = (1-v)^{-1} \rightarrow 1$, which corresponds to the lower bound $t_\beta$. In this case, the cost of reducing contacts to the threshold value explodes, and the willingness to pay for the intervention goes to zero.
	As $v \rightarrow 1$, $\beta = (1-v)^{-1} \rightarrow \infty$. In this regime, $\frac{d \mathbb{E}[Z(v,\beta,T)]}{d\beta}$ approaches 0, corresponding to the saturation effect discussed in Section \ref{sec:sensitivity_analysis}, and the social planner becomes unwilling to pay for an intervention that has no effect.

%% file: conclusion.tex
\vspace{-6pt}
\section{Conclusion}\label{s:conclusion}

In this work, we illustrated how gradient estimation methodologies can be applied to stochastic epidemic models to investigate how uncertainty impacts the effectiveness of interventions. We focused on incorporating the effects of randomness in transmission dynamics, as well as uncertainty in key epidemiological parameters. More broadly, this work provides a pathway for bridging predictive and prescriptive techniques in stochastic epidemic modeling, by showing how to incorporate uncertainties from model calibration into optimization and sensitivity analyses via gradient estimation.

We derived unbiased gradient estimators for the total number of infections with respect to the proportion immunized and the effective contact rate for an approximate discrete-time epidemic model. We used the weak-derivative approach to derive the gradient estimator with respect to the proportion immunized, and we used the likelihood-ratio method to derive the gradient estimator with respect to the effective contact rate. 
After applying variance reduction techniques such as CRN, the estimators achieve variance comparable to biased one-sided finite-difference estimators with CRN, while remaining unbiased. The likelihood-ratio estimator offered computational advantages, and the weak-derivative estimator had an attractive theoretical interpretation.
However, the gradient estimators differed from results using deterministic limit formulas, as the estimators consider the case of finite populations and time horizons. 

After extending the gradient estimators to accommodate uncertainty in the parameters, we conducted two illustrative simulation studies. 
First, we studied how parameter uncertainty impacted the marginal effect of vaccinations or the contact rate. The numerical results indicate that additional uncertainty decreases the marginal effects of vaccination or contact reduction. Second, we applied the gradient estimators to two simple optimization problems, involving tradeoffs between the costs of intervention and the costs of infections. 
Under parameter uncertainty, the optimal solutions shifted toward stronger interventions, reflecting a higher willingness to pay for outbreak prevention.

There are several directions of future work to be considered. The gradient estimation approach could be extended to the continuous-time case. The epidemic model could incorporate additional heterogeneities like variation in contact rates (e.g., contacts between cities versus contacts within cities). Another line of work concerns risk-sensitive optimization. Consequential outcomes for an epidemic lie in the tail of the outbreak size distribution, motivating the use of risk measures such as Conditional-Value-at-Risk (CVaR).